\newcommand{\HH}{\mathcal{H}}
\newcommand{\veps}{\varepsilon}
\newcommand{\E}{{\mathbb E}}
\newcommand{\etal}{{\it{et al.\ }}}
\DeclareMathOperator{\polylog}{polylog}
\DeclareMathOperator{\diam}{diam}
\DeclareMathOperator{\ddim}{ddim}
\DeclareMathOperator{\rad}{rad}
\DeclareMathOperator{\MST}{MST}
\DeclareMathOperator{\MStT}{MStT}
\DeclareMathOperator{\OPT}{OPT}
\DeclareMathOperator{\RS}{RS}
\DeclareMathOperator{\SBG}{SBG}
\DeclareMathOperator{\PD}{\Delta \Phi}
\newtheorem{theorem}{Theorem}[section]
\newtheorem{lemma}[theorem]{Lemma}
\theoremstyle{definition}
\newtheorem{definition}[theorem]{Definition}
\newcounter{this-list}
\newcounter{par-list}
\newlength{\parlistlength}
\begin{document}

\title{Near-linear time approximation schemes for \\
Steiner tree and forest in low-dimensional spaces}

\author{
Yair Bartal \\
Hebrew University\\
Email: yair@cs.huji.ac.il
\and
Lee-Ad Gottlieb \\
Ariel University\\
Email: leead@ariel.ac.il
}

\maketitle

\begin{abstract}
We give an algorithm that computes a $(1+\veps)$-approximate Steiner forest
in near-linear time 
$n \cdot 2^{(1/\veps)^{O(\ddim^2)} (\log \log n)^2}$.
This is a dramatic improvement upon
the best previous result due to Chan \etal \cite{Chan-16}, who
gave a runtime of about 
$n^{2^{O(\ddim)}} \cdot 2^{(\ddim/\veps)^{O(\ddim)} \sqrt{\log n}}$.

For Steiner tree our methods achieve an even better runtime
$n (\log n)^{(1/\veps)^{O(\ddim^2)}}$ in doubling spaces. For Euclidean space the runtime can be reduced to $2^{(1/\veps)^{O(d^2)}} n \log n$, 
improving upon the result of Arora \cite{{A-98}} in fixed dimension $d$.
\end{abstract}

\newpage

\section{Introduction}
In the Steiner tree and forest problems we are given a set $S$ of points, as well
as a subset $X \subset S$. The set $S-X$ constitutes the Steiner points, and
$X$ the set of real points. The Steiner tree problem is to find a miminal 
spanning tree for $X$, where the tree may also utilize points of $S-X$.
In the Steiner forest problem, we are also given a set $L$ of pairs or {\em terminals},
and the goal is to find a collection of trees on $S$ 
in which every terminal pair is found in the same tree, and for which the
sum of the edge-lengths in all the trees (the trees' weights) is minimized.
Clearly, $|X|-1$ terminals suffice to describe the desired point groupings.

The Steiner tree problem is NP-hard to approximate within a factor 
$\frac{96}{95}$ \cite{CC-08}, and admits a $1.39$-approximation \cite{BGRS-10}.
Arora \cite{A-98} considered the Euclidean case and presented a
PTAS\footnote{PTAS, which stands for a Polynomial-Time 
Approximation Scheme,
means that for every \emph{fixed} $\veps>0$ there is a $(1+\veps)$--approximation.
Note that for every constant $\veps>0$, the runtime is polynomial in $n$.
} 
for this problem with run time $n (\log n)^{\veps^{-O(d)}}$ in $d$ dimensions.
The Steiner forest problem inherits the same hardness of approximation of
$\frac{96}{95}$ from Steiner tree, 
and Agrawal \etal \cite{AKR-95} presented a 2-approximation for this problem. 
Borradaile \etal \cite{BKM-15} investigated the problem in the 
more restrictive setting of the Euclidean plane. Building upon techniques of 
Arora \cite{A-98} and Mitchell \cite{M-99} along with several innovations, 
they presented an algorithm that gives a $(1+\veps)$-approximate Steiner
forest in time $n \log^{O(1/\veps)}n$. 
Similarly, Bateni \etal \cite{BHM-11} considered the restriction of the problem to planar
graphs and graphs of bounded treewidth and obtained a $(1+\veps)$-approximation
in polynomial time. 

Borradaile \etal suggested that their construction may extend to 
low-dimensional Euclidean space, but if true this seems a non-trivial task.
More recently Chan \etal \cite{Chan-16} presented a PTAS
for low-dimensional Euclidean spaces, and in fact for the more general 
low-dimensional doubling spaces. To prove this result, they built off the
travelling salesman framework of \cite{BGK-16}, while utilizing several
additional (and quite sophisticated) techniques. Their randomized
algorithm yields a $(1+\veps)$-approximate Steiner forest with high
probability, in total time
$n^{2^{O(\ddim)}} \cdot 2^{(d/\veps)^{O(\ddim)} \sqrt{\log n}}$.

\paragraph{Our contribution.}
We revisit the Steiner forest problem, and prove the following:

\begin{theorem}\label{thm:main}
There exists a deterministic 
$n \cdot 2^{(1/\veps)^{O(\ddim^2)} (\log \log n)^2}$-time approximation scheme for the 
Steiner forest problem in doubling spaces.
\end{theorem}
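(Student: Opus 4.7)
The plan is to follow the Arora--Talwar template of hierarchical decomposition plus portal-based dynamic programming, but with two ingredients that drive the runtime down from Chan et al.'s $2^{\sqrt{\log n}}$ factor to $2^{(\log\log n)^2}$: a preprocessing reduction to a sparse light spanner, and a tight bound on the number of admissible connectivity patterns per cell in the DP.

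First, I would preprocess the input by computing a $(1+\veps)$-spanner of $X\cup S$ with bounded degree and lightness $(1/\veps)^{O(\ddim)}$, which is obtainable in near-linear time for doubling metrics. Any $(1+\veps)$-approximate Steiner forest can be realized inside this spanner, so we reduce to computing the optimal forest in a sparse weighted graph whose underlying metric still has doubling dimension $\ddim$. Crucially, the spanner caps the total weight of every candidate solution by a factor of $(1/\veps)^{O(\ddim)}$ times $\MST(X)$, which is what makes the subsequent patching argument quantitative.

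Next, I would build a deterministic hierarchical net-tree partition, in which a cell at scale $2^i$ has diameter $O(2^i)$ and only $(r/2^i)^{O(\ddim)}$ cells at that scale meet a ball of radius $r$. On the boundary of each cell I place $(1/\veps)^{O(\ddim)}$ portal points so that any spanner edge crossing a cell boundary can be rerouted through a portal at $(1+\veps)$ multiplicative cost. The dynamic program then proceeds bottom-up: each cell $C$ stores a table indexed by an \emph{interface pattern} --- namely the subset of its portals currently carrying the partial forest, together with an equivalence relation on these portals and the terminals inside $C$ that records which components have already been joined. Parent tables are assembled from children by matching shared portals and taking transitive closures, and the final answer is read off at the root under the connectivity constraints imposed by the terminal set $L$.

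The heart of the argument --- and the main obstacle --- is proving that only $2^{(1/\veps)^{O(\ddim^2)}(\log\log n)^2}$ interface patterns need be considered per cell. A naive count of $2^{\text{portals}}$ is far too large. I would argue via an Arora-style patching lemma against the light spanner that any $(1+\veps)$-approximate solution crosses each cell boundary in only $(1/\veps)^{O(\ddim)}$ edges, so the ``active portal'' bit-string has $(1/\veps)^{O(\ddim)}$ entries. For the equivalence relation on active portals I would show that the number of distinct \emph{merge scales} at which pairs of portals must already belong to the same component is only $O(\log\log n)$, by combining the fact that each merge costs at least a scale's worth of edge weight with the doubly-logarithmic number of effectively distinct scales that survive spanner preprocessing. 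This two-scale accounting gives the claimed $(\log\log n)^2$ exponent, and I expect the bulk of the technical effort to lie here, both in making the merge-scale bound rigorous and in showing that the patched solution can be realized exactly by an interface pattern of the stated complexity. Combined with $O(n)$ cells in the hierarchy, this yields the stated runtime; determinism is inherited from the deterministic spanner and partition constructions.
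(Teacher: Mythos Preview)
Your proposal has two genuine gaps, each corresponding to one of the paper's main technical contributions.

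\textbf{The spanner step is not a forest banyan.} You propose to compute a $(1+\veps)$-spanner on $X\cup S$ and assert that its weight is at most $(1/\veps)^{O(\ddim)}\cdot w(\MST(X))$. This is false: spanner lightness is relative to $w(\MST(X\cup S))$, which can be arbitrarily larger than $w(\MST(X))$ when the Steiner points $S\setminus X$ are spread out. The paper devotes all of Section~3 to constructing a \emph{forest banyan}: it first proves a structural theorem (Theorem~3.2) that any Steiner tree decomposes into shallow proper Steiner subtrees of depth $\tilde O(1/\veps)$ joined by real edges, and then uses this to identify a carefully chosen subset $S'\subset S$ for which $w(\MST(X\cup S'))=\veps^{-O(\ddim)}w(\MST(X))$ (Theorem~3.3). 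Without this, the sparsity parameter $q$ in the subsequent reduction (Theorem~4.1) is unbounded, and the entire DP runtime analysis collapses.

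\textbf{The $(\log\log n)^2$ exponent does not come from ``merge scales''.} Your sketch for bounding the interface patterns---that only $O(\log\log n)$ ``merge scales'' survive spanner preprocessing---is not a known phenomenon and you give no mechanism for it; a standard hierarchical net-tree has $\Theta(\log n)$ scales, and nothing in spanner preprocessing collapses them. The paper obtains the bound by an entirely different, and novel, clustering: each cluster is partitioned into children of \emph{varying} radii, with small-radius children near the boundary and progressively larger ones toward the center (governed by the annulus admissibility bound of Lemma~5.1). This yields $\veps^{-O(\ddim)}q^2\log\log n$ children per cluster. Combined with carefully engineered primary/secondary consistency properties (Lemma~5.2) that replace the naive equivalence relation on portals by a record of which child clusters touch which portals, the per-cluster configuration count becomes $2^{q^{O(1)}\log\log n}$, and combining children gives $2^{q^{O(1)}(\log\log n)^2}$ work. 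Your standard net-tree with a portal-partition interface would not reproduce this bound.
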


We note that \cite{Chan-16} achieve a PTAS only for constant dimension, 
and even in this regime our runtime is only 
$n \cdot 2^{(1/\veps)^{O(1)} (\log \log n)^2}$,
a strict and significant improvement over
$n^{O(1)} \cdot 2^{(1/\veps)^{O(1)} \sqrt{\log n}}$.
Further, our result greatly increases the range of $\ddim$ for 
which a PTAS is possible from constant to 
$\ddim \le \sqrt{c \log_{1/\veps} \log n}$
(for some constant $c$).
The goal of achieving near-linear time approximation schemes
(similar to those of \cite{A-98, EKM-12, BG-13, BKM-15})
has long been a central focus of research, and we are first to do this
for Steiner forest. 
And finally, our algorithm is deterministic.

We are aided in our task by a decomposition theorem of Eisenstat \etal \cite{EKM-12}
(Theorem 2.2, see also \cite{BHM-11})\footnote{
This construction can be implemented in time  $\veps^{-O(\ddim)} n \log n$ on a doubling spanner.}
which states that it is 
sufficient to find an approximation to the Steiner problem whose cost
is that of the optimal solution plus $\veps$ times the minimum spanning tree of the space.
To this we add the following contributions:
\begin{itemize}
\item
Our first contribution is that there exists an approximate Steiner forest with a very
simple structure: It's composed of shallow Steiner binary trees wherein all
internal points are Steiner points, and the shallow trees are joined to each
other at the leaves, which are real points (Section \ref{sec:proper}).
This holds for all metric spaces (not only those of low dimension), and
as such constitutes a fundamental contribution to the study of Steiner 
trees and forests.
\item
Building on this construction, we show how to identify a small 
universal subset of the Steiner points sufficient for constructing an approximate 
Steiner forest.
These Steiner points also admit a light spanning graph (Section \ref{sec:subset}).
The indentification of these points allows us to build a light {\em forest banyan} for the 
space, that is a graph which weighs only a constant factor more than the minimum
spanning tree on the real points,
yet to which the Steiner forest may be restricted (Section \ref{sec:forest-banyan}).

Previously, the existence of forest banyans was known only for planar graphs \cite{EKM-12},
and it was not known whether doubling or even Euclidean spaces admit forest banyans.
We believe this result to be of independent interest.
\item
We then utilize a decomposition technique that allows us to consider only spaces
that have sparse Steiner forests
(Section \ref{sec:sparsity}).
A similar approach was used by \cite{BG-13}
for travelling salesman tours, and by \cite{Chan-16} for Steiner forest.)
Indeed, our main theorem, Theorem \ref{thm:metric-forest}, assumes 
sparse spaces, and gives an approximation guarantee that is additive 
in the weight of the $\MST$ of the space.
\item
Finally, we present a clustering technique and 
associated dynamic program which
allows the computation of a low-weight Steiner forest
(Section \ref{sec:dynamic}). 
While the dynamic program has elements of those of \cite{BKM-15,Chan-16},
we introduce several new innovations that allow the dynamic program
to exploit the sparsity of the underlying forest banyan.
In particular, our clustering is novel in the type of clusters it creates:
Each cluster has children of various radii, where 
the child sub-clusters near the boundary of the parent cluster all have small 
radius, while the child clusters farther in have progressively larger radii.
This construction allows us to achieve a far superior run time to what was
previously known.
\end{itemize}

Although our focus is on Steiner forest, our forest banyan 
(Theorem \ref{thm:banyan}) in conjunction with the simpler clustering of
\cite{BG-13} or the Euclidean approach of \cite{RS98}
immediate imply the following improvements for Steiner tree:

\begin{theorem}
The Steiner tree problem can be solved in time
$n (\log n)^{(1/\veps)^{O(\ddim^2)}}$
in doubling spaces, and in time
$2^{(1/\veps)^{O(d^2)}} n \log n$
in Euclidean spaces.
\end{theorem}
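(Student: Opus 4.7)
The plan is to combine the forest banyan of Theorem~\ref{thm:banyan} with the near-linear-time TSP PTAS frameworks of Bartal--Gottlieb \cite{BG-13} (doubling) and Rao--Smith \cite{RS98} (Euclidean), replacing their TSP dynamic program with a Steiner-tree variant. First I would invoke Theorem~\ref{thm:banyan} (with the single terminal group $X$) to construct, in near-linear time, a host graph $H$ of total weight $O(\MST(X))$ that contains a $(1+\veps)$-approximate Steiner tree of $X$. Combined with the Eisenstat \etal \cite{EKM-12} reduction, this means it suffices to find a Steiner tree in $H$ whose cost is at most $\OPT + \veps \cdot \MST(X)$, which (since $\MST(X) \le 2\cdot\OPT$) yields a $(1+O(\veps))$-approximation.

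Next, I would localize the search to subgraphs of $H$ and run the hierarchical cluster decomposition and dynamic program of \cite{BG-13} in the doubling case, or the banyan-based dissection of \cite{RS98} in the Euclidean case. In both frameworks, $\veps$-nets (or portals) of size $(1/\veps)^{O(\ddim)}$ are placed at each cluster boundary, and the weight-$O(\MST(X))$ bound on $H$ lets the standard portalization-error charging argument bound the total rerouting cost by $\veps\cdot \MST(X)$, absorbed by the Eisenstat reduction. The only non-TSP element is the DP state: at each cluster boundary one enumerates, for each subset of at most $(1/\veps)^{O(\ddim)}$ portals crossed, the partition of those portals induced by the connected components of the optimal Steiner tree restricted to the cluster. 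This gives $2^{(1/\veps)^{O(\ddim)}}$ states per cluster, and children are merged exactly as in Arora's original Euclidean Steiner DP; unlike the Steiner forest case of Theorem~\ref{thm:main}, no per-state tracking of terminal group memberships is needed because Steiner tree imposes only a single global connectivity constraint.

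Multiplying the per-cluster DP cost by the number of clusters yields the two stated runtimes: the $O(\log n)$ hierarchy levels with $n$ total clusters in the doubling framework of \cite{BG-13} give $n(\log n)^{(1/\veps)^{O(\ddim^2)}}$, while the Euclidean banyan-plus-dissection of \cite{RS98} gives $2^{(1/\veps)^{O(d^2)}} n \log n$. The main obstacle I expect is simply verifying that the portalization/charging arguments of \cite{BG-13,RS98}, originally phrased for TSP tours, go through verbatim for a connected subgraph of the banyan; this is the same adaptation Arora carried out from TSP to Steiner tree in low-dimensional Euclidean space, so once Theorem~\ref{thm:banyan} is in hand it should be essentially a bookkeeping exercise rather than a new conceptual step.
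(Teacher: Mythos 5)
Your plan is exactly the paper's argument: the authors state this theorem as an immediate consequence of the forest banyan (Theorem~\ref{thm:banyan}) combined with the clustering of \cite{BG-13} in the doubling case and the Rao--Smith approach \cite{RS98} in the Euclidean case, which is precisely the combination you describe (your extra appeal to the Eisenstat \etal reduction is harmless but unnecessary here, since for Steiner tree $w(\MST(X)) \le 2\,\OPT$ already converts the additive error into a multiplicative one, as you note). No substantive difference from the paper's (admittedly terse) justification.
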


\section{Preliminaries and notation}

\paragraph{Graphs.}
The weight of an edge $e$ ($w(e)$) is its length. The weight of 
an edge-set $E$ is $\sum_{e \in E} w(e)$, and the weight of a graph
$G=(V,E)$ is the weight of its edge-set, $w(G)=w(E)$.

Let $B(u,r) \subset V$ refer to the vertices of $V$ contained in the 
closed ball centered at $u \in V$ with radius $r$.
$B^*(u,r)$ is the edge set of the complete graph on $B(u,r)$.

\begin{definition}
A graph $G=(V,E)$ is $q$-{\em sparse} if
for every radius $r$ and vertex $v \in V$, the weight of
$B^*(v,r) \cap E$ is at most $qr$. 
\end{definition}

\paragraph{Steiner trees.}
Consider a point set $S$ endowed with a metric distance function.
Let a subset $X \subset S$ be the set of `real' points, and $S-X$ 
the Steiner points. The minimum spanning tree of $X$ 
($\MST(X)$) does not use points of $S-X$, while the minimum Steiner
tree of $X$ on $S$ ($\MStT(X) = \MStT_S(X)$) may. An edge is called a 
{\em real edge} if it connects
two points of $X$, and otherwise it is called a Steiner edge.
It is well-known that $w(\MStT(X)) \le w(\MST(X)) \le 2 w(\MStT(X))$, 
and a corollary of this is that for sets $A \subset B$ we have
$w(\MST(A)) \le 2w(\MST(B))$.

Any Steiner tree can be modified to be binary:
After choosing a root, Steiner nodes with only one child
can be bypassed and deleted, while nodes with more children
may have this number reduced by duplicating the node, assigning
the duplicate as a child of the node at distance 0, and dividing
the original children among the node and its duplicate.

\paragraph{Doubling dimension and hierarchies.}
For a point set $S$, let $\lambda = \lambda(S)$
be the smallest number such that every
ball in $S$ can be covered by $\lambda$ balls of half the
radius, where all balls are centered at points of $S$.
Then $\lambda$ is the {\em doubling constant} of $S$,
and the {\em doubling dimension} of $S$ is
$\ddim = \ddim(S)=\log_2\lambda$ \cite{As-83}.
The dimension is often taken to be an integer by rounding
up the real number. The following is the well-known 
packing property of doubling spaces (see for example \cite{KL-04}):
If $S$ is a metric space and $C \subseteq S$ has
minimum inter-point distance $b$, then
$|C| = \left( \frac{2\rad(S)}{b} \right)^{O(\ddim)}$.

Similar to what was described in \cite{GGN-06,KL-04}, a point-set $X$
is a $\gamma$-net of $Y$ if it satisfies the following properties:
\renewcommand{\labelenumi}{(\roman{enumi})}
\begin{enumerate}
\item Packing: For every $x,y \in X$, $d(x,y) \ge \gamma$.
\item Covering: Every point $y \in Y$ is strictly within distance $s$ of
some point $x \in X$: $d(x,y) < \gamma$.
\end{enumerate}
The previous conditions require that the points of $X$ be spaced out, yet
nevertheless cover all points of $Y$. A point in $X$ covering a point in
$Y$ is called a {\em parent} of the covered point; this definition allows
for a point to have multiple parents.

A hierarchy $\HH$ for set $S$ is composed of a series of nets, 
where each level of the hierarchy is both a subset and  net of the level beneath it.
We shall assume throughout (and without loss of generality) that the
minimum inter-point distance is 1.
For $i=0,\ldots,P$ (where $P := \lceil \log \diam(S) \rceil$),
fix $H_i\subseteq S$ to be an $2^i$-net of $S$,
called the net of \emph{level} $i$, or of \emph{scale} $2^i$.
Notice that the bottom hierarchical level $H_0$ contains all points,
and the top level $H_P$ contains only a single point.
Throughout this paper, we will assume without loss of generality
(as in \cite{A-98} and
subsequent papers) that the aspect ratio of the space is $n^{O(1)}$,
and so a hierarchy has $O(\log n)$ levels. We can build this
hierarchy in time $2^{O(\ddim)}n \log n$ \cite{KL-04,HM-06,CG-06}.

\paragraph{Spanning trees and spanners.}
The following lemma, due to Talwar \cite{T-04} (see also \cite{S-10}),
uses the doubling dimension to bound the weight of the minimum spanning
tree of any metric graph. It is an adaptation of a similar statement 
of Arora \cite{A-98} for Euclidean spaces.

\begin{lemma}\label{lem:mst}
For point set $S' \subset S$ we have
$w(\MST(S')) \le 4 |S'|^{1-{1}/{\ddim(S)}} \cdot\diam(S')$.
\end{lemma}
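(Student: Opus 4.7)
The plan is to exploit the doubling property through a hierarchical net construction and sum telescoping edge contributions across scales. Write $n := |S'|$, $D := \diam(S')$, and $d := \ddim(S)$. The two bounds I have on a $2^i$-separated subset of $S'$ are (a) the trivial bound $n$, and (b) the packing bound $(2D/2^i)^{O(d)}$ coming from the preliminaries. These are equally strong at the ``crossover scale'' $r_0 := D/n^{1/d}$, so my argument splits at $2^i \approx r_0$.

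For each integer $i \ge 0$, I take a maximal $2^i$-separated subset $N_i \subseteq S'$ (so $N_i$ is a $2^i$-net of $S'$), and I arrange the family $\{N_i\}$ to be nested by choosing $N_{i+1} \subseteq N_i$ greedily. Since $N_i$ is $2^i$-separated in a set of radius at most $D$, packing gives $|N_i| \le (2D/2^i)^{O(d)}$, while trivially $|N_i| \le n$. For each $x \in N_i$, assign it a ``parent'' in $N_{i+1}$ at distance strictly less than $2^{i+1}$ (covering property).

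Connecting every non-root point of $N_i$ to its parent in $N_{i+1}$ for all $i$ yields a connected spanning subgraph on $S'$ once $N_i$ collapses to a singleton at level $i = \lceil \log_2 D \rceil + 1$. Its total weight is at least $w(\MST(S'))$ and is at most
\[
\sum_{i \ge 0} |N_i| \cdot 2^{i+1}.
\]
For scales $2^i > r_0$, I apply the packing bound and obtain a per-scale contribution of order $(D/2^i)^d \cdot 2^{i+1}$. Summed over $i$, this is a geometric series in $1/2^{d-1}$, dominated by its first term at $2^i \approx r_0$, evaluating to $O(D \cdot n^{1-1/d})$. For scales $2^i \le r_0$, I apply the bound $|N_i| \le n$ and obtain $n \cdot 2^{i+1}$ per scale; this is geometric in $2$, dominated by its largest term, again evaluating to $O(n \cdot r_0) = O(D \cdot n^{1-1/d})$.

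The only real delicacy is squeezing the leading constant down to $4$. This requires being careful with (i) the factor-$2$ in the parent-edge length, (ii) the exact packing constant used at scale $r_0$, and (iii) tightening the geometric sums against their respective initial terms, rather than absorbing everything into a big-$O$. Aside from that bookkeeping, the argument is exactly Talwar's adaptation of Arora's Euclidean hierarchical-MST bound, and no deeper ideas are needed.
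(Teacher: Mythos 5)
The paper does not actually prove this lemma; it is quoted from Talwar \cite{T-04} (itself an adaptation of Arora's Euclidean bound), so there is no in-paper proof to compare against. Your argument is the standard one behind that citation: nested $2^i$-nets, parent edges of length at most $2^{i+1}$, the packing bound above the crossover scale $r_0 = D/n^{1/d}$ and the trivial bound $|N_i|\le n$ below it, with both geometric sums meeting at $r_0$ to give $O(D\,n^{1-1/d})$. That is sound, with two caveats you should make explicit. First, you need the packing bound with exponent \emph{exactly} $d$ (i.e.\ $|N_i|\le (cD/2^i)^{d}$ for an absolute constant $c$, which follows directly from iterating the doubling property); the version stated in the paper's preliminaries has $O(\ddim)$ in the exponent, and using that form would degrade the final exponent to $1-1/O(d)$. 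Second, the constant: as sketched, each of your two sums already contributes roughly $8$ to $16$ times $D\,n^{1-1/d}$ once you account for the parent-edge factor $2$, the base constant in the packing bound, and the geometric-series factor, so the total is more like $16$--$32$ than $4$; calling the reduction to $4$ ``bookkeeping'' is optimistic, and Talwar's own accounting is tighter than a naive two-sided telescoping. Since the lemma is only used in the paper inside expressions where such constants are absorbed (e.g.\ the $2^{O(\ddim)}q^{1-1/\ddim}r$ bound in Theorem \ref{thm:sparse-forest}), this does not affect anything downstream, but as a proof of the stated inequality with the constant $4$ it is incomplete.
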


Let $G=(V,E)$ be a metric graph, where vertices $V$ represent points of
some metric set $S$, while the edge weights of $E$ correspond to inter-point
distances in $S$.
A graph $R = (V_R,E_R)$ is a $(1+\veps)$-stretch {\em spanner} of $G$
if $R$ is a subgraph of $G$
(specifically, $V_R = V$ and $E_R \subset E$), and also
$d_R(u,v) \le (1+\veps) d_G(u,v)$
for all $u,v \in V$.
Here, $d_G(u,v)$ and $d_R(u,v)$ denote the shortest path
distance between $u$ and $v$ in the graphs $G$ and $R$, respectively.

It is known that 
Euclidean spaces admit $(1+\veps)$-stretch spanners with {\em lightness}
$W_E = \veps^{-O(d)}$, meaning that the total spanner weight is at most
a factor $W_E$ times the weight of the $\MST$ of the set \cite{DHN93}.
Recently, a series of papers \cite {G-15, FS-16, BLW-19} considered light 
spanners in doubling spaces, ultimately demonstrating that they admit
spanners of lightness $W_D = \veps^{-O(\ddim)}$ \cite{BLW-19}.
In both the Euclidean and metric case, the well-known greedy spanner 
achieves this weight bound;
this is the spanner that considers all edges in order of weight, and adds
the current edge to the spanner if the current stretch on the partial spanner
is greater than $1+\veps$. A variant of this can be computed in time 
$\veps^{-O(\ddim)} n \log n$, and possesses only $\veps^{-O(\ddim)}$ edges.



\section{A forest banyan}\label{sec:forest-banyan}

In this section we show how to identify a set of Steiner points
$S' \subset S-X$ that are sufficient for use in constructing an 
approximate Steiner forest for $X$. Our forest banyan will simply be 
a spanner constructed on top of $X \cup S'$. 
To accomplish the identification, we will require a deep result about Steiner trees:
In Section \ref{sec:proper} we will show that every Steiner tree can be represented as 
a collection of shallow Steiner trees connected at the leaves by real edges
(edges that connect two points of $X$).
The existence of a shallow Steiner tree decomposition will then be
used to identify the set $S'$ (Section \ref{sec:subset}):
If the shallow Steiner subtree can be closely approximated by a tree
on $X$ alone, we can simply connect these points with spanner edges for $X$.
But if such a construction is impossible, it implies for these points a certain
excess of weight in respect to their diameter. We will use this property to
identify a small (and light) set of Steiner points that can be added to
$S'$.

\subsection{Proper Steiner tree decompositions}\label{sec:proper}

Given point-set $S$ and real subset $X \subset S$,
define a {\em proper} Steiner tree for $X$ to be 
a binary tree in which the leaves are real points
(points of $X$) and the internal nodes are Steiner points
(points of $S-X$).
If $P$ is a proper Steiner tree, then 
$X(P)$ denotes the real points of $P$, that is its leaves.
We can represent any Steiner tree $T$ as a collection of 
proper Steiner trees $\cal{P}$ connected at the leaves by a set $E$ of 
real edges, so we denote $T = ({\cal P},E)$, with any pair
$P,P' \subset \cal{P}$ connected by at most one real edge in $E$.
We call $({\cal P},E)$ the {\em proper decomposition} of $T$.\footnote{
Note that ${\cal P}$ is a set of proper Steiner trees, and these include Steiner edges.
So the notation $({\cal P},E)$ should not be conflated with the standard
graph notation $(V,E)$, where the first set contains only vertices.
}

Suppose some edge $e_{v,w}$ in $P$ connects a parent $v$ to its child
$w$; then we define $P_w$ 
to be the proper Steiner tree rooted at $w$, and $P_v$ 
to be the proper Steiner tree formed by pruning
$P_w$ from $P$, and connecting $v$'s remaining child in 
$P$ directly to $v$'s parent, thereby bypassing and deleting $v$.
(That is, if $u$ is the parent of $v$, and $v$ has two children
$w,x$, then $v$ is removed, and $x$ becomes a child of $u$.
In the special case where $v$ is the root, then $v$ is simply removed, 
and its two children $w,x$ become the roots of their respective subtrees.)
$P_v,P_w$ are not connected, and so splitting $P$ results in the 
splitting of $T$ into trees $T_v,T_w$ ($P_v \subset T_v, P_w \subset T_w$).
We prove the following:

\begin{lemma}\label{lem:decomp}
Given point-set $S$, subset $X \subset S$ and any $0<\veps<1$, 
there exists a Steiner tree $T = ({\cal P},E)$ 
satisfying
\begin{enumerate}
\item
Low weight: $w(T) \le (1+\veps)w(\MStT(X))$ 
\item
Separation:
For all $P \in {\cal P}$ and edge $e_{v,w} \in P_i$,
the replacement of $P$ by $P_v,P_w$
would result in the splitting of $T$ into
Steiner trees $T_v,T_w$ 
($P_v \subset T_v, P_w \subset T_w$)
for which
$$d(X(T_v),X(T_w)) \ge w(e_{v,w}) + \veps w(P_w).$$
\end{enumerate}
\end{lemma}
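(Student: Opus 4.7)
The plan is to construct $T$ by iteratively modifying an optimal binary Steiner tree $T^{*}=\mathrm{MStT}(X)$, enforcing the separation condition one violating edge at a time while keeping the cumulative weight increase small. Binarize $T^{*}$ and root it arbitrarily, and let $(\mathcal{P},E)$ be its proper decomposition; this decomposition need not satisfy the separation condition. The basic refinement is the following swap: if there is an edge $e_{v,w}\in P\in\mathcal{P}$ with $d(X(T_v),X(T_w))<w(e_{v,w})+\varepsilon w(P_w)$, choose $x\in X(T_v), y\in X(T_w)$ realising this distance, remove $e_{v,w}$, insert the real edge $(x,y)$, and bypass the (now degree-reduced) Steiner node $v$ by reconnecting its remaining child to its parent. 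The bypass cannot increase weight (triangle inequality), and the edge swap increases weight by strictly less than $\varepsilon w(P_w)$. The resulting tree is still a binary Steiner tree on $X$, and in its proper decomposition $P$ has been replaced by $P_v$ and $P_w$ as two separate proper subtrees, now joined through the new real edge via other proper subtrees.

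Termination is immediate: each swap strictly increases the number of proper subtrees in the decomposition, and this number is bounded by $|X|$. Hence the iteration halts at a tree $T$ whose decomposition satisfies separation everywhere, establishing the structural part of the lemma.

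The main obstacle is the weight bound $w(T)\le(1+\varepsilon)w(T^{*})$. The cumulative increase is $\sum_i \varepsilon w(P_w^{(i)})$, and the task reduces to showing $\sum_i w(P_w^{(i)}) \le w(T)$; rearranging and rescaling $\varepsilon$ by a constant then yields the claim. I would enforce a bottom-up processing discipline: always swap at a violating edge whose child subtree $P_w$ has no further internal violations in the current tree. Under this rule, the subtree $P_w$ ``freezes'' as a proper subtree of the post-swap tree whose internal edges never participate in later swaps, so the $P_w^{(i)}$'s become edge-disjoint subsets of the final $T$ and their weights sum to at most $w(T)$.

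The technical subtlety I expect to be hardest is that a swap at $e_{v,w}$ can in principle re-introduce violations inside the nominally-frozen $P_w$: when $y\in X(P_w)$, cutting an internal edge $e_{v',w'}\in P_w$ of the post-swap tree produces different real-point sets $X(T^{\mathrm{new}}_{v'}), X(T^{\mathrm{new}}_{w'})$ than before, because the real points of $T_v$ become reachable through the new real edge via $y$. I would address this either by choosing $y$ outside $X(P_w)$ whenever $X(T_w)\setminus X(P_w)\ne\emptyset$ (so the swap does not route through $P_w$'s interior and $P_w$'s internal separations are literally preserved), or by a finer extremal argument — e.g., selecting $T$ so as to minimise $w(T)+\eta|\mathcal{P}(T)|$ for a suitable small $\eta>0$, and showing that any remaining violation allows a swap that strictly decreases this potential. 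Either route confines the relevant $P_w$'s to an edge-disjoint (essentially laminar) family, which is the crucial ingredient making the charging tight.
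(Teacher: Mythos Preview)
Your overall architecture matches the paper's: start from $\mathrm{MStT}(X)$, repeatedly locate a violating edge $e_{v,w}$, delete it, reconnect $T_v,T_w$ by the shortest real edge, and charge the excess weight (at most $\varepsilon w(P_w)$) to the edges of $P_w$. The paper's selection rule is ``minimum $w(P_w)$ among violators,'' which in particular guarantees your condition that $P_w$ has no internal violations at the time of the swap. Your termination argument (the number of proper subtrees strictly increases) is correct.

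The gap you flag is the actual crux, and neither of your proposed fixes closes it. For fix~(a): even when $X(T_w)\setminus X(P_w)\neq\emptyset$, nothing forces the \emph{closest} real pair to have its $T_w$-endpoint outside $X(P_w)$; if you insist on $y\notin X(P_w)$ you lose the bound $d(x,y)<w(e_{v,w})+\varepsilon w(P_w)$ and the charging collapses. And the case $X(T_w)=X(P_w)$ is not exceptional. For fix~(b): with the potential $w(T)+\eta|\mathcal P(T)|$ every swap \emph{increases} both summands, so minimising it selects a tree with no swaps performed at all and tells you nothing about separation; flipping the sign does not help either, since $w(P_w)$ is unbounded relative to $\eta$.

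The paper's resolution is a short monotonicity lemma that you are missing: for \emph{every} edge $e_{a,b}$ in every proper subtree, the quantity $d(X(T_a),X(T_b))$ is non-decreasing under any closest-point swap performed elsewhere. Indeed, such a swap splits one side, say $X(T_a)$, into $X_1\cup X_2$ and reattaches $X_1$ to its nearest point in $X_2\cup X(T_b)$. If that nearest point lies in $X_2$, the partition $\{X(T_a),X(T_b)\}$ is unchanged. If it lies in $X(T_b)$, then $d(X_1,X_2)\ge d(X_1,X(T_b))\ge d(X(T_a),X(T_b))$, and the new cut distance is $d(X_2,\,X_1\cup X(T_b))=\min\{d(X_2,X_1),d(X_2,X(T_b))\}\ge d(X(T_a),X(T_b))$. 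Thus once an internal edge of $P_w$ is a non-violator it remains one forever, $P_w$ is never split again, and each edge is charged at most once --- giving $w(T)\le(1+\varepsilon)w(\mathrm{MStT}(X))$ directly, with no laminar-family bookkeeping needed.
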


\begin{proof} 
The procedure to produce $T$ is as follows:
We initialize $T=({\cal P},E)$ to be the (proper decomposition of the) 
minimum Steiner tree of $X$ on $S$.
For any $P \in \cal{P}$ and parent-child pair $v,w \in P$,
say that $P_w$ {\em violates} the separation property if the removal of edge
$e_{v,w} \in P$ and replacement of $P$ by $P_v,P_w$ splits $T$ into trees 
$T_v,T_w$
for which $d(X(T_v),X(T_w)) < w(e_{v,w}) + \veps w(P_w)$. 
We locate an edge $e_{v,w} \in P$ for which $P_w$ violates the 
separation property and for which $w(P_w)$ is the minimum over all violators.
We then remove $P$ and replace it with $P_v,P_w$, and reattach 
$T_v,T_w$ by adding to $E$ a single real edge connecting the closest points between 
$X(T_v)$ and $X(T_w)$.
Now $T=({\cal P},E)$ is the newly produced Steiner tree, 
and the procedure is repeated upon it until no violators remain.

Turning to the analysis, the second item follows by construction. 
For the first item, the removal of an edge $e_{v,w} \in P$
decomposes $P$ into two proper Steiner trees $P_v,P_w$.
A new edge is then added to $E$ to reconnect $T_v,T_w$, 
and we will charge the weight of this new edge to 
$e_{v,w}$ and $P_w$ (where $P_w$ will be charged only an $\veps$-fraction
of its own weight):
Edge $e_{v,w}$ (upon being deleted from $E$) is charged exactly its own weight $w(e_{v,w})$,
while the rest of the charge for the new edge is divided among the edges of $P_w$
proportional to their lengths, so that each edge in $P_w$ is charged less than $\veps$ times
its length. We will show that the edges of $P_w$ cannot
be charged again at a later stage of the procedure, from which we conclude that 
the sum of all charges to subtrees is at most $\veps$ times the original weight of $T$.
Since the weight of edges deleted from $T$ were replaced with identical weight,
we conclude that the final tree weighs less than $(1+\veps)$ times the original
weight of $T$, and the first item follows.

It remains to show that each edge is charged only once.
We will show that once $P_w$ is added to the collection it is never split again.
Consider an edge $e_{a,b}$ in $P_w$: 
Since $e_{a,b}$ was not removed before $e_{v,w}$, and since 
$w(P_b) < w(P_w)$, it must be that $P_b$ was not a violator.
We show that once $P_b$ is not a violator, it cannot subsequently become a violator:
Consider $X(T_a), X(T_b)$. 
Suppose some split breaks off a subset of $X(T_a)$; for example
$X_1(T_a)$ is broken off of $X(T_a)$, leaving subset $X_2(T_a)$.
Recall that by construction, subset $X_1(T_a)$ is reattached to the closest point in
$X_2(T_a) \cup X(T_b)$.
If $X_1(T_a)$ is reattached to a point of $X_2(T_a)$, then set $X(T_a)$
remains the same, as does the distance between $X(T_a), X(T_b)$.
Otherwise, $X_1(T_a)$ is reattached to a point of $X(T_b)$, 
and so by construction it must be that
$d(X_1(T_a), X_2(T_a)) 
\ge d(X_1(T_a), X(T_b))
\ge d(X(T_a), X(T_b))$.
Then the distance between the new sets is
$d(X_2(T_a), X_1(T_a) \cup X(T_b)) 
= \min \{d(X_2(T_a), X_1(T_a)), d(X_2(T_a), X(T_b)) \} 
\ge d(X(T_a), X(T_b))
$.
The analysis for a split of $X(T_b)$ is the same.
\end{proof}

We now strengthen the result of Lemma \ref{lem:decomp} to give a decomposition
into {\em shallow} proper Steiner trees.

\begin{theorem}\label{thm:decomp}
Given point-set $S$, subset $X \subset S$ and any $0<\veps<1$, 
there exists a Steiner tree $T = ({\cal P},E)$ 
satisfying
\begin{enumerate}
\item
Low weight: $w(T) \le (1+4\veps) w(\MStT(X))$ 
\item
Separation:
For all $P \in {\cal P}$ and edge $e_{w,v} \in P$,
the replacement of $P$ by $P_v,P_w$
would result in the splitting of $T$ into
Steiner trees $T_v,T_w$ 
($P_v \subset T_v, P_w \subset T_w$)
for which
$$d(X(T_v),X(T_w)) \ge w(e_{v,w}) + \veps w(P_w).$$
\item
Depth and cardinality: Each $P \in \cal{P}$ has depth at most 
$t = (8/\veps) \ln (2/\veps)$,
and so it has at most $2^t$ leaves and $2^t-1$ internal nodes.
\end{enumerate}
\end{theorem}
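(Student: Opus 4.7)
The plan is to apply Lemma~\ref{lem:decomp} with parameter $\veps$ to obtain an initial decomposition $T = (\mathcal{P}, E)$ satisfying the low-weight and separation requirements, and then to extend its iterative splitting procedure to also resolve a new type of ``violator'': any $P \in \mathcal{P}$ whose depth exceeds $t$. Splitting at an edge $e_{v,w}$ replaces $P$ by $P_v, P_w$ and inserts a new real reattachment edge of length $d(X(T_v), X(T_w))$. Because separation is a \emph{lower} bound on inter-cluster distances, it continues to hold for the edges not touched by any given split, so the real work is to bound the total extra weight introduced by depth-splits by $3 \veps \cdot w(\MStT(X))$.

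The central step I would establish is a \emph{cheap-cut} lemma: in any proper binary Steiner tree $P$ of depth exceeding $t = (8/\veps)\ln(2/\veps)$, there exists an edge $e_{v,w}$ at which the resulting reattachment real edge has length at most $O(\veps) \cdot w(P_w)$. I would prove this by walking down $P$ from the root, always descending into the heavier child, and tracking the sequence of subtree weights $w(P^{(i)})$. Either (a) $w(P^{(i)}) \le (1-\veps/8)\, w(P^{(i-1)})$ at every one of the first $t$ steps, so $w(P^{(t)}) \le (1-\veps/8)^t w(P^{(0)}) \le (\veps/2)\, w(P)$ and cutting off this small deep subtree is cheap (its real reattachment distance is upper-bounded via a path routed through $P$ of length at most twice its own Steiner weight); or (b) at some intermediate step the decay is slower than $(1-\veps/8)$, which by the heavy-child rule forces the light sibling subtree at that step to carry only a small fraction of $w(P^{(i)})$, and cutting off this small sibling is cheap by the same routing argument.

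Given the cheap-cut lemma, one iterates: depth-violating $P$'s are split, each time at a cheap cut whose cost is charged to the small piece produced, and the procedure of Lemma~\ref{lem:decomp} continues to handle separation violators. By picking the ``minimum violator'' at each step (in the spirit of Lemma~\ref{lem:decomp}), each portion of the original tree is charged at most once, giving a total extra weight of at most $3\veps \cdot w(\MStT(X))$. Combining with the $(1+\veps)$ factor from Lemma~\ref{lem:decomp} gives the claimed $(1 + 4\veps)\, w(\MStT(X))$ weight bound, and the depth bound holds at termination. The main obstacle is proving the cheap-cut lemma rigorously: upper-bounding the reattachment distance by $O(\veps)\, w(P_w)$ requires a careful geodesic-via-$P$ argument, and one must verify that the combined charging across separation- and depth-splits does not double-count.
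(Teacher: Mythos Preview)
Your cheap-cut lemma has a genuine gap. You assert that the reattachment edge has length $O(\veps)\, w(P_w)$, bounded ``via a path routed through $P$,'' but any such path must traverse $e_{v,w}$ and then descend from $v$ to a leaf on the \emph{large} side $P_v$; nothing in your heavy-path descent controls that descent distance in terms of $w(P_w)$. Indeed, the separation property already in force forces the reattachment length to be at least $w(e_{v,w}) + \veps w(P_w)$, which need not be $O(\veps)\, w(P_w)$. The same problem occurs in case~(b): cutting off the light sibling $P_u$ still requires reaching a leaf of the heavy subtree $P^{(i)}$, whose root-to-leaf distance is not bounded by the small quantities $w(P_u), w(e_{v,u}), w(e_{v,w})$. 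And even if each cut cost only $O(\veps)\, w(P)$, a single $P$ may have many branches exceeding depth $t$, so the ``charge each small piece once'' accounting does not close without further argument.

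The paper's approach is quite different and rests on an insight you did not use: the separation property itself forces leaf-path weights to grow geometrically as one moves \emph{up} the tree. If $L_x$ is the in-order path through the leaves of $P_x$, separation yields $d(X(P_y),X(P_z)) \ge \veps \max\{w(P_y),w(P_z)\}$ for siblings $y,z$, and together with $\tfrac{1}{2}w(P_x) \le w(L_x) \le 2w(P_x)$ this gives $w(L_x) \ge (1+\veps/4)(w(L_y)+w(L_z))$. Iterating $t=(8/\veps)\ln(2/\veps)$ levels shows that the \emph{total} weight of all depth-$t$ subtrees of $P$ is at most $\veps\, w(P)$. One then simply truncates every $P$ at depth $t$, replaces each excised subtree $R_i$ by the $\MST$ of its leaves (total cost $\le 2\sum_i w(R_i) \le 2\veps\, w(P)$), and reattaches the parent to the nearest leaf. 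This is pure pruning, so separation is preserved automatically for all surviving edges and no interleaving with the Lemma~\ref{lem:decomp} procedure is needed.
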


\begin{proof}
We begin with the tree $T$ implied by Lemma \ref{lem:decomp}, and 
restrict ourselves to pruning subtrees from the proper Steiner subtrees
in $T$. Then the separation property is preserved.

Consider in turn each proper Steiner subtree $P \in {\cal P}$,
and impose on the points of $P$ an arbitrary left-right sibling ordering,
which implies an ordering on all points. 
Let $r$ be the root of $P$, 
and $R \in S-X$ the set of internal nodes of $P$
at hop-distance exactly $t$ from $r$. 
Let ${\cal R}$ be the set of proper
Steiner trees rooted at the points of $R$, with tree 
$R_i \in {\cal R}$ rooted at Steiner point $r_i \in P$.

We will first show that $\sum_i w(R_i) \le \veps w(P)$.
Consider any subtree $P_x$ of $P$ rooted at some point $x$.
If $L_x$ is the path that visits the leaves of subtree $P_x$ in order, then 
we can show that
$\frac{w(P_x)}{2} \le w(L_x) \le 2w(P_x)$:
The upper-bound on $w(L_x)$ follows from the fact that $L_x$ is not longer than an
in-order traversal of all the nodes of $P_x$. 
The lower-bound on $w(L_x)$ follows by induction:
The base case is when the subtree is a leaf, and for the inductive step let
$y,z$ be the children of $x$, with respective subtrees and in-order
paths $P_y,L_y,P_z,L_z$. By the inductive assumption we have
$w(L_y) \ge \frac{w(P_y)}{2}$
and
$w(L_z) \ge \frac{w(P_z)}{2}$,
and recalling the separation guarantee we have
\begin{eqnarray*}
w(L_x) 
&\ge& w(L_y) + d(X(P_y),X(P_z)) + w(L_z)	\\
&\ge& w(L_y) + \max \{ d(x,y), d(x,z) \} + w(L_z)	\\
&\ge& \frac{w(P_y)}{2} + \frac{d(x,y)+d(x,z)}{2} + \frac{w(P_z)}{2}	\\
&=& \frac{w(P_x)}{2}.
\end{eqnarray*}
We then conclude that
$\frac{w(P_x)}{2} \le w(L_x) \le 2w(P_x)$.
Using this fact, and again applying the separation property, 
we can further show that 
\begin{eqnarray*}
w(L_x)
&\ge& w(L_y) + d(X(P_y),X(P_z))	+ w(L_z)	\\
&\ge& w(L_y) + \veps \max \{ w(P_y),w(P_z) \} + w(L_z)	\\
&\ge& w(L_y) + \frac{\veps}{2} (w(P_y)+w(P_z)) + w(L_z)	\\
&\ge& (1+\frac{\veps}{4})(w(L_y) + w(L_z)).
\end{eqnarray*}
Crucially, this means that as the tree is traversed upwards,
the weight of the tree edges grows at a factor of at least
$(1+\frac{\veps}{4})$
per level, which implies that the weights of the paths
grow exponentially as the tree is traversed upwards.
Defining in-order paths $L,L_i$ respectively for $P,R_i$,
and noting that when $0 \le x \le 1$ we have $(1+x) \ge e^{x/2}$,
we conclude that
$$w(P)
\ge \frac{w(L)}{2}
\ge (1+\frac{\veps}{4})^t \sum_i w(L_i)
\ge e^{t\veps/8} \sum_i w(L_i)
\ge \frac{2}{\veps} \sum_i w(L_i)	
\ge \frac{1}{\veps} \sum_i w(R_i)	
$$
as claimed.

Now modify $P$ as follows:
For each subtree $R_i$ defined above, remove all its edges from $P$, 
and add to $E$ the $\MST$ of $X(R_i)$.
Also remove the edges connecting $r_i$ (the root of subtree $R_i$) to its parent,
and instead connect the parent to the closest point in $X(R_i)$.
This completes the construction for each $P \in {\cal P}$,
and the resulting graph clearly satisfies the
separation and depth properties of the lemma. 
For the weight bound: The weight of each subtree $P$ does not increase,
and the $\MST$ edges added to $E$ in place of the subtrees $R_i$
weigh at most $\sum_i 2w(R_i) \le 2\veps w(P)$. 
Incorporating the $(1+\veps)$ weight term of Lemma \ref{lem:decomp},
we have total weight $(1+\veps)(1+2\veps)<(1+4\veps)$.
\end{proof}

\subsection{Identifying Steiner subsets}\label{sec:subset}

In the previous section we showed
that every Steiner tree can be replaced by a set of
proper Steiner trees connected at their leaves, where each tree has
$2^t = 2^{\tilde{O}(1/\veps)}$
leaves (Theorem \ref{thm:decomp}). 
In this section, we will
use this property to identify a subset of the Steiner points
sufficient for approximating the proper Steiner trees.
Let $t = \tilde{O}(1/\veps)$ be as defined in Theorem \ref{thm:decomp}.
We prove the following:

\begin{theorem}\label{thm:banyan}
Given sets $X \subset S$ and a parameter $\veps$, we can in time
$2^{O(t \ddim)} |X| \log |S|$ identify a subset $S' \subset S$
for which 
$\MStT_{S'}(X) \le (1+\veps)\MStT_{S}(X)$,
and further, 
$w(\MST(X \cup S')) = \veps^{-O(\ddim)} w(\MST(X))$.
\end{theorem}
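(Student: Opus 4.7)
Plan:

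I would combine the existential decomposition of Theorem~\ref{thm:decomp} with an algorithmic enumeration guided by the hierarchical net $\HH$. First, invoking Theorem~\ref{thm:decomp} with parameter $\veps/c$ for a suitable constant $c$ gives existentially a $(1+\veps)$-approximate Steiner tree $T^\star=({\cal P}^\star,E^\star)$ in which every proper subtree $P\in{\cal P}^\star$ is binary, of depth at most $t$, with at most $2^t$ real leaves and $2^t-1$ internal Steiner points. Shallowness forces each internal Steiner point of $P$ to lie within distance $w(P)$ of any leaf of $P$, and the separation property of the same theorem forces $w(P)$ to be within a constant factor of $\diam(X(P))$ (via the in-order path argument used inside the proof of Theorem~\ref{thm:decomp}). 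Consequently each proper subtree sits inside a ball of radius $O(\diam(X(P)))$ around any of its leaves and ``belongs'' to a single scale of $\HH$.

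For each anchor pair $(x,i)$ with $x\in X$ and $2^i$ a scale of $\HH$, the algorithm enumerates a family of at most $2^{O(t\ddim)}$ candidate local Steiner configurations: configurations whose real leaves are drawn from $X\cap B(x,c_1\,2^i)$, whose internal Steiner points are drawn from a sufficiently refined net inside $B(x,c_2\,2^i)$, and whose topology is a binary tree of depth at most $t$. For each configuration $P$, the algorithm compares the weight of $P$'s Steiner tree to the $(1+\veps)$-spanner cost on $X(P)$; if $P$ beats the spanner by at least $\veps'\cdot\diam(X(P))$ (for $\veps'=\veps/c$), the algorithm snaps $P$'s internal Steiner positions to the nearest actual points of $S\setminus X$ and adds them to $S'$. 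Correctness of the approximation bound follows by case analysis on the proper subtrees of $T^\star$: those with $w(P)$ close to $w(\MST(X(P)))$ are handled by replacing $P$ with $\MST(X(P))$ at cost $O(\veps)\,w(P)$ (using $w(\MST(X(P)))\le 2\,w(\MStT(X(P)))$); the remaining proper subtrees — those with genuine Steiner savings — are matched up to snapping by an enumerated configuration that is accepted and whose snapped Steiner points end up in $S'$.

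The main obstacle is the MST weight bound $w(\MST(X\cup S')) = \veps^{-O(\ddim)}\,w(\MST(X))$. Here I would argue by charging: each Steiner point added to $S'$ comes with a witness configuration that saves at least $\veps'\cdot\diam(X(P))$ against the light spanner on $X$, whose total weight is $\veps^{-O(\ddim)}\,w(\MST(X))$. Summing these savings across all witnesses, controlling overlap via packing and the factor-$2$ gap between scales of $\HH$, and applying Lemma~\ref{lem:mst} locally within each $O(2^i)$-radius window to convert ``number of added Steiner points at scale $2^i$'' into ``MST weight added at scale $2^i$'', yields the claimed bound. The most delicate piece is the enumeration: keeping the candidate family at $2^{O(t\ddim)}$ per anchor pair — rather than the naive $2^{O(2^t\ddim)}$ one would obtain by independently choosing each of the $2^t-1$ internal Steiner positions — requires a structured descent encoding in which the tree is elaborated level by level using only $2^{O(\ddim)}$ local choices per level, while remaining expressive enough to match (up to snapping) the internal Steiner points of every proper subtree of $T^\star$ at its appropriate scale.
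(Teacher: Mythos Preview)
Your proposal has a genuine gap in the enumeration step, and it is precisely the step you flag as ``most delicate.'' A shallow proper Steiner tree of depth $t$ has up to $2^t-1$ internal Steiner nodes and up to $2^t$ leaves. Even if you fix the topology, specifying the positions of the internal nodes within a net of the ball requires $\Omega(\ddim)$ bits per node, hence $\Omega(2^t\ddim)$ bits total; and the number of leaf subsets of $X\cap B(x,c_1 2^i)$ is already $2^{\Omega(2^t)}$ in the worst case. Your ``structured descent with $2^{O(\ddim)}$ local choices per level'' cannot work as stated: level $k$ of a binary tree has up to $2^k$ nodes, so encoding all their positions needs $\Omega(2^k\ddim)$ bits for that level alone, not $O(\ddim)$. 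Summed over levels this is $\Omega(2^t\ddim)$, not $O(t\ddim)$. There is no evident compression that is simultaneously this small and expressive enough to match an arbitrary shallow tree from $T^\star$ up to snapping, so the claimed $2^{O(t\ddim)}$ runtime is unsupported.

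The paper avoids enumeration entirely. It proves two lemmas. Lemma~\ref{lem:balls} runs a potential-function argument on a light $(1+\veps)$-spanner of $X$ to produce a family $\mathcal D$ of clusters (each a net of a ball, of cardinality $2^{O(t\ddim)}$, with $|\mathcal D|=\veps^{-O(\ddim)}|X|$ and $\sum_{D}w(\MST(D))=\veps^{-O(\ddim)}w(\MST(X))$) such that every leaf set $V$ with $|V|\le 2^t$ is either already cheap ($w(\MStT(V))\le(1+\veps)\diam(V)$) or is netted by some $D\in\mathcal D$ of comparable diameter. Lemma~\ref{lem:ball} then, for each cluster $D$, simply takes nets of $S$ at geometrically spaced scales around $D$ and adds them to $S'$; the approximation guarantee is obtained by a per-node charging against the exponential growth of the in-order leaf path (the same $\frac{w(P)}{2}\le w(L)\le 2w(P)$ inequalities used inside Theorem~\ref{thm:decomp}), never by matching a whole configuration. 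The MST bound on $X\cup S'$ then follows directly from the net structure of each $S_i$ and the $\sum_D w(\MST(D))$ bound of Lemma~\ref{lem:balls}. In short, the paper decouples ``where Steiner points might be needed'' (a spanner-potential argument on $X$ alone) from ``which Steiner points to take'' (blind nets of $S$), and neither step looks at Steiner tree topologies at all.
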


Then the forest banyan is simply a light $(1+\veps)$-stretch spanner
on the points of $X \cup S'$.
The rest of this section details the proof of Theorem \ref{thm:banyan}. 
We begin by showing how to identify a subset of the Steiner points
sufficient for approximating the proper Steiner trees for a single
cluster (a subset of $X$), where we pay some cost in the diameter
of the cluster.

\begin{lemma}\label{lem:ball}
Given a subset 
$D \subset X \subset S$,	
point-set $S$ equipped with a hierarchy, and
parameter $0<\veps<1$;
in time $\veps^{-O(\ddim)} |D| \log |S|$ we can identify a 
subset $S' \subset S$ with the following properties:
\begin{enumerate}
\item
Size: $|S'| = \veps^{-O(\ddim)} |D|$.
\item
Weight: $w(\MST(D \cup S')) = \veps^{-O(\ddim)} w(\MST(D))$.
\item
Approximation:
Every subset $V \subset D$ with $|V| \le 2^t$ 
that admits a shallow proper Steiner tree $P$ on $S$, 
also admits a Steiner tree on $S'$ of weight at most 
$(1+\veps)w(P) + \veps \diam(D)$.
\end{enumerate}
\end{lemma}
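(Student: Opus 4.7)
My plan is to construct $S'$ directly from the net hierarchy $\HH$ on $S$. For each real point $d \in D$ and each relevant level $j$ of the hierarchy, I would include in $S'$ all net points of $N_j$ lying within a $\mathrm{poly}(1/\veps)$-multiple of $2^j$ from $d$. By the packing property of doubling spaces, only $\veps^{-O(\ddim)}$ net points of $N_j$ fall inside any such ball; by limiting to only $O(1)$ ``relevant'' levels per point (amortized through the hierarchy), I expect the total $|S'|$ to be $\veps^{-O(\ddim)}|D|$. The weight bound $w(\MST(D \cup S')) = \veps^{-O(\ddim)} w(\MST(D))$ would then follow because each added Steiner point sits within a small multiple of some $\MST(D)$-edge, so $\MST(D \cup S')$ can be obtained from $\MST(D)$ by appending one short spur per added Steiner point.

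For the approximation property, given a shallow proper Steiner tree $P$ on $V \subset D$ with $|V| \le 2^t$ leaves, I would round each Steiner point $s \in P$ to its nearest point of $S'$ at a scale tied to the relevant subtree weight. Since Theorem~\ref{thm:decomp} guarantees $P$ has depth at most $t$ and at most $2^t - 1$ internal nodes, rounding each internal node within a radius $r = \veps(w(P) + \diam(D))/2^t$ perturbs every incident edge by at most $r$ and yields a cumulative error of $O(\veps w(P) + \veps \diam(D))$, giving a Steiner tree on $S'$ of the desired weight $(1+\veps)w(P) + \veps \diam(D)$. The key check is that the needed net point actually lies in $S'$: since every Steiner point of $P$ is within distance $w(P)$ of some leaf in $V \subset D$, the hierarchical neighborhoods around that leaf should capture it, provided the covering radii in the construction are chosen appropriately at each scale.

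\textbf{The hardest part} is avoiding a $2^{O(\ddim \cdot t)}$ blowup in $|S'|$. A Steiner point can lie as far as $w(P) \le 2^t \diam(D)$ from any leaf in $V$, so naively covering every scale within a $2^t$-sized ball around each $d$ would inflate the size by a factor of $(2^t/\veps)^{O(\ddim)}$---far beyond the $\veps^{-O(\ddim)}|D|$ target. I expect to overcome this by exploiting the separation property from Theorem~\ref{thm:decomp}: only one hierarchical scale per Steiner point is truly relevant (determined by local subtree weight rather than by $w(P)$ globally), and the hierarchical nesting of nets lets a single net point simultaneously serve many Steiner subtrees. Proving this amortization carefully---so that each point of $D$ contributes only $\veps^{-O(\ddim)}$ Steiner points across all scales, and that the rounded tree remains connected with the claimed bound---is the main technical work.
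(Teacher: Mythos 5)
Your plan correctly isolates the central difficulty and even names the right idea for resolving it, but the step you defer is the actual proof, and the concrete scheme you do write down fails. Rounding every internal node of $P$ within a \emph{uniform} radius $\veps(w(P)+\diam(D))/2^t$ is precisely the scheme that forces the $2^{O(\ddim\, t)}$ blowup you worry about: by the separation property every edge of $P$ has length at most $O(\diam(D))$ and the depth is $t$, so Steiner points of $P$ can lie at distance up to $\Theta(t\,\diam(D))$ from $D$, and guaranteeing a point of $S'$ at spacing $\veps\diam(D)/2^t$ near every such location costs $(t 2^t/\veps)^{O(\ddim)}$ points per element of $D$, contradicting property 1. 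As stated, properties 1 and 3 cannot both hold for your construction; the fix is exactly the part you label ``the main technical work,'' so the proposal does not close.

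What is missing is the quantitative charging argument that makes per-node, locally-scaled rounding work. The paper's construction takes, for each of the $O(t)$ scales $2^i$ between roughly $\veps\diam(D)/2^t$ and $2t\diam(D)$, a $\Theta(\veps^2 2^i)$-net of only those points of $S$ within distance $2^i$ of $D$; packing against a $2^i$-net $D'$ of $D$ gives $\veps^{-O(\ddim)}|D|$ points per scale, and the inequality $2^i(|D'|-1)\le 2w(\MST(D))$ yields the weight bound (your ``one short spur per added point'' does not suffice, since top-scale points sit at distance $\Theta(t\,\diam(D))$ from $D$, far beyond any single $\MST(D)$ edge). For the approximation, a Steiner point $p$ at distance $f$ from its nearest leaf descendant is rounded at the scale $2^i\approx f$, incurring error $O(\veps^2 f)=O(\veps^2 w(L_p))$, where $L_p$ is the leaf traversal of $p$'s subtree; the separation property gives $w(L_p)-w(L_a)-w(L_b)\ge(\veps/8)w(L_p)$, so each rounding error is an $O(\veps)$-fraction of a telescoping increment, and the increments sum to at most $w(L)\le 2w(P)$, giving total multiplicative error $O(\veps)w(P)$. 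Points whose $f$ falls below the bottom scale are rounded at cost $\veps\diam(D)/2^t$ each, and $|V|\le 2^t$ gives the additive $\veps\diam(D)$. Without this charging step --- which is where the separation property of Theorem \ref{thm:decomp} actually enters --- neither the $(1+\veps)w(P)$ bound nor the compatibility of the three properties follows.
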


\begin{proof}
The construction is as follows:
Set 
$r = \diam(D)$
and 
$r' = \frac{\veps r}{2^t}$.
For each scale 
$i=\lfloor \log r' \rfloor, \ldots, \lceil \log (2rt) \rceil$, 
we consider all points of $S$ within distance $2^i$ of $D$, and
then extract from $S$ a $2^i(\veps^2/64)$-net $S_i$ for these close
points. Set the final solution to be $S' = \cup_i S_i$.

The stated runtime is easily achieved by using the hierarchy
to identify net-points in $S$ for the points of $S$ 
sufficiently close to $D$.
To prove the bound on size:
First note that index $i$ takes only 
$O(t + \log (1/\veps)) = O(t)$ 
distinct values.
For each $S_i$, consider a set $D' \subset D$ 
that is a $2^i$-net of $D$.
By construction, all points of $S_i$
are within distance $2 \cdot 2^i$ of $D'$. Then 
$|S_i| = \veps^{-O(\ddim)} |D'| 
= \veps^{-O(\ddim)} |D|$,
and so 
$|S'| \le \sum_i |S_i| 
= t \veps^{-O(\ddim)} |D| 
= \veps^{-O(\ddim)} |D|$,
as claimed.

To prove the bound on weight:
For any $p \in D'$ we have that
$w(\MST(B(p,2\cdot 2^i) \cap S_i)) = \veps^{-O(\ddim)}2^i$.
Since 
$S_i = \cup_{p \in D'} [B(p,2\cdot 2^i) \cap S_i]$
we have that
\begin{eqnarray*}
w(\MST(D \cup S_i)) 
&=&	w(\MST(D \cup_{p \in D'} [B(p,2\cdot 2^i) \cap S_i]))		\\
&\le& 	w(\MST(D)) + \sum_{p \in D'} w(\MST(B(p,2\cdot 2^i) \cap S_i))	\\
&=& 	w(\MST(D)) + \veps^{-O(\ddim)}2^i |D'| 				\\
&=& 	\veps^{-O(\ddim)} (w(\MST(D))+2^i) 				\\
&=& 	\veps^{-O(\ddim)} w(\MST(D)),
\end{eqnarray*}
where the penultimate equality follows from the fact that $D'$ is a $2^i$-net,
implying that
$2w(\MST(D)) \ge w(\MST(D')) \ge 2^i (|D'|-1)$.
Now we can form $\MST(D \cup S')$ by taking 
$\cup_i \MST (B \cup S_i)$, 
which increases the total weight by a factor of $O(t)$. 
We conclude that 
$w(\MST(D \cup S')) = \veps^{-O(\ddim)} w(\MST(D)$,
as claimed.

To prove the approximation bound we use a charging argument.
Consider a Steiner point $p \in S$ in the 
shallow proper Steiner tree $P$ of $V \subset D$ on $S$. 
Let $a,b$ be the childen of $p$ with respective subtrees
$P_a,P_b \subset P$, and define $L,L_a,L_b$ to be the respective
leaf traversals for $P,P_a,P_b$.
As in the proof of Theorem \ref{thm:decomp} we have 
$\frac{w(P)}{2} \le w(L) \le 2w(P)$, 
and also that
$w(L) \ge (1+\frac{\veps}{4})(w(L_a)+w(L_b))$;
the latter inequality implies that
the increase in $w(L)$ over $w(L_a),w(L_b)$ is
$w(L)-w(L_a)-w(L_b) > (\veps/8) w(L)$.

Let $f$ be the distance from $p$ to its closest leaf descendant,
and trivially $f<w(P)$. 
Since the distance between any two points in $D$ is at most $2r$,
the separation property of Theorem \ref{thm:decomp} implies that
every edge in $P$ is not larger than the maximum distance between
points of $X(P)$, that is $2r$. 
Noting that the depth of $P$ is at most $t$, we have $f \le 2rt$.
Now take the value of $i$ which satisfies $2^{i-1} < f \le 2^i$,
and clearly $i \le \lceil \log (2rt) \rceil$.
If $i<\lfloor \log r' \rfloor$, then the cost of moving $p$ to
its closest net-point in $S_i$ is less than $\frac{\veps r}{2^t}$,
and the cost of moving $|V|$ such points is less than
$|V|\frac{\veps r}{2^t} \le \veps r = \veps \diam(D)$.
Otherwise $\lfloor \log r' \rfloor \le i \le \lceil \log (2rt) \rceil$,
and so $S_i$ contains some Steiner point within distance 
$2^i(\veps^2/64) 
< 2f(\veps^2/64) 
< 2w(P)(\veps^2/64)
\le 4w(L)(\veps^2/64)
= w(L)(\veps^2/16)$
of $p$. 
The replacement of $p$ by this Steiner point is charged to the
weight increase of $L$ over $L_a,L_b$ --
the increase (shown above to be $(\veps/8) w(L)$) is charged a 
$\frac{w(L)(\veps^2/16)}{(\veps/8) w(L)} = \frac{\veps}{2}$ 
fraction of its weight --
from which it follows that the total cost of replacing all 
such points of $P$ by nearby Steiner points is 
$\frac{\veps}{2} w(L) \le \veps w(P)$.
\end{proof}

Lemma \ref{lem:ball} would be useful if for each (unknown) 
shallow proper Steiner tree, we could identify
a ball of similar radius which contains it, and to which we
can charge the cost of the new Steiner points. 
This identification is indeed possible, as in the following lemma.

\begin{lemma}\label{lem:balls}
Given $X \subset S$, a hierarchy for $X$, and 
and parameter $0<\veps<1$;
in time $\veps^{-O(\ddim)} |X| \log |X|$ we can identify a 
set $\mathcal{D}$ containing clusters
$D \in \mathcal{D}$ ($D \subset X$) satisfying:
\begin{enumerate}
\item
Cardinality: $|{\mathcal D}| = \veps^{-O(\ddim)}|X|$,
and $|D| = 2^{O(t \ddim)}$
for each $D \in {\mathcal D}$.
\item
Covering:
Every subset $V \in X$ of size $|V|\le 2^t$
either has a light Steiner tree on $S$ of weight 
$w(\MStT(V)) \le (1+\veps)\diam(V)$, 
or else some set
$D \in {\mathcal D}$ 
of diameter $O(\diam(V))$
is a 
$\frac{\veps \diam(V)}{|V|}$-net of $V$.
\item
Low weight:
$\sum_{D \in {\mathcal D}} w(\MST(D)) = \veps^{-O(\ddim)}w(\MST(X))$.
\end{enumerate}
\end{lemma}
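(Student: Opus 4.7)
I would construct $\mathcal{D}$ from the net hierarchy $\HH$ of $X$. For each scale $i$ and each net point $p \in H_i$, define a candidate cluster $D_{i,p}$ as an $(\veps \cdot 2^i / 2^{t+1})$-net of $B(p, c \cdot 2^i) \cap X$ for a fixed constant $c$ (say $c=3$). The packing property of doubling metrics immediately gives $|D_{i,p}| \le (c \cdot 2^{t+1}/\veps)^{O(\ddim)} = 2^{O(t\ddim)}$, meeting the per-cluster size bound. All candidates can be enumerated in $\veps^{-O(\ddim)} |X| \log |X|$ time by walking through the precomputed hierarchy.

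For the covering property, given $V \subset X$ with $|V| \le 2^t$ and $\diam(V) = \Delta$, choose $i^*$ with $2^{i^*-1} \le \Delta < 2^{i^*}$. Pick any $v_0 \in V$ and let $p \in H_{i^*}$ be its ancestor, so $d(p, v_0) < 2^{i^*}$; then $V \subset B(v_0, \Delta) \subset B(p, 3 \cdot 2^{i^*})$. The cluster $D_{i^*, p}$ therefore has diameter $O(\Delta)$ and serves as a $(\veps \cdot 2^{i^*}/2^{t+1})$-net of $V$, and this resolution is at most $\veps \Delta / |V|$, as required. So covering holds whenever $D_{i^*, p}$ is retained in $\mathcal{D}$.

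The cardinality bound $|\mathcal{D}| = \veps^{-O(\ddim)} |X|$ (as opposed to the naive $\Theta(|X| \log |X|)$ obtained by including every candidate) forces pruning. The plan is to retain only those $(i,p)$ corresponding to \emph{branching} nodes in the compressed hierarchy of $X$ --- those where $p \in H_i$ has multiple distinct descendants in $H_{i-1}$; there are only $O(|X|)$ such nodes, and allowing a packing-factor $\veps^{-O(\ddim)}$ of near-ancestors per point achieves the bound. When the chosen $p$ at scale $i^*$ lies inside a non-branching chain, no suitable cluster exists --- here one must instead invoke the ``light Steiner tree'' alternative, arguing that a non-branching chain concentrates $V$ along a narrow geometric configuration, so that Steiner points of $S$ can be threaded along the chain to produce a Steiner tree of weight at most $(1+\veps)\Delta$. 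Formalizing this chain-to-light-tree correspondence is the main technical obstacle. The weight bound $\sum_D w(\MST(D)) = \veps^{-O(\ddim)} w(\MST(X))$ then follows by bounding each $w(\MST(D_{i,p}))$ via Lemma~\ref{lem:mst} (using that $D_{i,p}$ is a net within a ball of radius $O(2^i)$) and invoking the dendrogram-style identity $\sum_{(i,p) \text{ branching}} 2^i = O(w(\MST(X)))$, so that the per-scale factors telescope against the MST of $X$ up to the claimed $\veps^{-O(\ddim)}$ overhead.
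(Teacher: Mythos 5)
Your candidate clusters and the covering argument for the case where a cluster is retained are fine, but the two steps you defer are exactly where the lemma lives, and both fail as proposed. The dichotomy ``non-branching chain $\Rightarrow$ light Steiner tree'' is not a formalization obstacle; it is false. Whether $p \in H_{i^*}$ branches is a property of the net structure of $X$ at scale $2^{i^*-1}$, which is logically independent of the Steiner-tree weight of a particular small subset $V$. For instance, three points of $X$ at the vertices of an equilateral triangle of side $\Delta$ fit inside a single ball of radius $0.87\Delta \le 2^{i^*-1}$, so their scale-$i^*$ ancestor can be non-branching (if $X$ has no other nearby points), yet $w(\MStT(V)) = \sqrt{3}\,\Delta \gg (1+\veps)\Delta$. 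The paper's dichotomy is instead driven by \emph{weight excess}: a potential function is seeded with the light spanner of $X$, a ball is admitted (and spawns a cluster) only when swallowing it decreases the potential by $\Omega(\diam(B))$, and the covering proof shows that any $V$ with $w(\MStT(V)) > (1+\veps)\diam(V)$ forces some ball at scale $\approx \diam(V)$ to have such a decrease, so that ball or a nearby one must have been admitted.

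The weight bound also fails quantitatively even if you grant the pruning. Each retained cluster is a $\frac{\veps\Delta}{2^t}$-net of a ball of radius $\Theta(\Delta)$, hence has $2^{\Theta(t\ddim)}$ points and, by Lemma \ref{lem:mst}, $w(\MST(D))$ as large as $2^{\Theta(t\ddim)}\Delta$. Combining this with the dendrogram identity $\sum_{\text{branching}} 2^i = O(w(\MST(X)))$ gives $\sum_D w(\MST(D)) = 2^{O(t\ddim)}\, w(\MST(X))$, and since $t = \Theta(\veps^{-1}\log(1/\veps))$ this exceeds the claimed $\veps^{-O(\ddim)}\, w(\MST(X))$ by a factor exponential in $1/\veps$. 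The paper escapes this only because clusters are created solely where the local spanner already carries a proportionally large excess: it charges $w(\MST(D))$ to the potential decrease of the admitting ball (showing $\PD(B) = \Omega(w(\MST(D \cap B)))$ and relating $D$ to $D \cap B$ by packing), and the total potential available to spend is $\veps^{-O(\ddim)}\, w(\MST(X))$. Your scheme has no analogous budget, so both the dichotomy and the weight bound need the excess-weight mechanism rather than the hierarchy's branching structure.
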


\begin{proof}
For our analysis, we define a potential function on collections
of balls and inter-ball edges.
The potential $\Phi$ of a collection is the sum of diameters of 
balls in the collection, plus the portion of each edge 
not contained in a ball. That is, 
(i) if edge $e_{a,b}$ is not contained in any
ball, it contributes $w(e_{a,b})$ to $\Phi$;
(ii) if $a$ is contained in a ball or multiple balls but 
$b$ is not, then let $B(x,r)$ be the smallest ball
containing $a$, and the edge contributes 
$w(e_{a,b}) - (r-d(x,a))$ to $\Phi$;
and
(iii) if both $a$ and $b$ are contained in balls, then let
$B(x,r), B(x',r')$ be the smallest balls
containing $a$ and $b$ respectively, and the edge contributes
$\max \{0,w(e_{a,b}) - (r-d(x,a)) - (r'-d(x',b))\}$
to the potential.


Our procedure to build collections is as follows:
First define collections $C_i \in {\mathcal C}$
for $i=0,\ldots,t+ \lceil \log (80/\veps) \rceil-1$.
Collection $C_i$ has radius parameter $r_i = 2^i$,
while the $j$-th level 
($j=0,\ldots,\frac{\lceil \log \diam(X) \rceil}{t}$)
of collection $C_i$ has radius parameter 
$r_{i,j} = 2^{j[t+\lceil \log (80/\veps) \rceil]}r_i 
= 2^{j[t+\lceil\log (80/\veps) \rceil]+i}$.
This means that each possible ball radius is represented 
(up to a factor of 2) in some level of some $C_i$.
Now initialize each $C_i$ to contain the edges of the 
light $(1+\veps/4)$-spanner edges of $X$, and no balls. 
Then the initial potential of $C_i$ is
$\Phi_i \le W_D \cdot w(\MST(X))$.
Our analysis will consider each $C_i$ separately.

We now detail an {\em insertion step}
that adds some ball $B$ to collection $C_i$,
while removing from $C_i$ all edges and smaller balls 
fully contained within $B$.
Define the {\em potential decrease} of $B$ 
with respect to $C_i$ ($\PD_i(B)$) to be
the difference in potential of $C_i$ before an insertion
step of $B$, and the potential of $C_i$ after such an insertion step. 
Now our procedure for collection $C_i$ considers $r_{i,j}$ for increasing $j$:
Consider all $\veps r_{i,j}$-net points of $X$,
and investigate the balls centered at these points with radii in the 
range $[r_{i,j}, 4r_{i,j}]$;
these balls are considered in order of maximum potential decrease.
For a candidate ball $B$, we undertake an insertion step for
$B=B(x,r)$ if all the following conditions are met:
\begin{enumerate}
\item
$\PD(B) \ge \frac{\diam(B)}{10}$.
\item
$B$ fully contains at least one edge, 
or one ball of $C_i$ not centered at $x$.
\item
$B$ is not within distance $r_{i,j}/\veps$ of some ball of 
the same parameter $r_{i,j}$ previously added to $C_i$.
\end{enumerate}
The procedure on $C_i$ terminates after considering all relevant
balls for all $j$.

The construction of ${\mathcal D}$ is as follows:
For every ball $B(x,r)$ added to some collection $C_i$
in the above procedure, define a set $D$ to be the 
$\delta$-net of $B(x,2r/\veps) \cap X$,
for 
$\delta=\frac{\veps r}{8 \cdot 2^t}$.
Add every $D$ to ${\mathcal D}$, and this completes our
construction.

The cardinality of ${\mathcal D}$ follows 
from the second insertion step condition:
The spanner has at most $\veps^{-O(\ddim)}|X|$
edges, so $\veps^{-O(\ddim)}|X|$ balls can be
added to $C_i$ in place of edges. Likewise,
only $X$ balls can be added to $C_i$ in place
of smaller balls. So
$|{\mathcal D}| = t\veps^{-O(\ddim)}|X| = \veps^{-O(\ddim)}|X|$.  
The size of each $D \in {\mathcal D}$ is
immediate from the construction.

For the claim of covering:
Consider set $V$ with minimum Steiner tree
on $X$ of weight at least 
$(1+\veps) \diam(V)$. 
By construction, some $i,j$ satisfies that 
$\frac{\diam(V)}{2} \le r_{i,j} < \diam(V)$.
We will show below that some ball with this radius parameter
contains $V$ and also has potential decrease at least
$\frac{\veps}{4} \diam(V)$.
Then either this ball is added 
to $C_i$, or a nearby ball has already been added --
in either case, by construction the corresponding set $D$
contains a 
$\delta = \frac{\veps r}{8 \cdot 2^t}
\le \frac{\veps 4r_{i,j}}{8 \cdot 2^t}
\le \frac{\veps \diam(V)}{2^t}
$-net for $V$, as required.

Now take 
the minimal weight Steiner tree $T$
for $V$ on $X$, where $T$ is restricted to edges of
the light $(1+\veps/4)$-stretch spanner on $X$. 
Consider the state of these edges at the beginning of 
the procedure for $C_i$ and $j$:
If all edges of $T$ were present in the collection,
they would contribute at least 
$w(\MStT(V))$ to $\Phi_i$.
Now some edges of $T$ may have been removed by insertion steps, 
and in their place sit balls of some parameter $r_{i,j-1}$ or less,
but we can show that the remaining edges of $T$ along
with the added balls still contribute at least
$(1-\veps/4)w(\MStT(V))
\ge (1+\veps)(1-\veps/4)\diam(V)
\ge (1+\veps/2)\diam(V)$
to $\Phi_i$:
Take a ball $B=B(x,r)$ sitting on edge of $T$, 
and we analyze the loss of potential due to $B$.
To do this, we will charge against paths of $T$, 
where we decompose $T$ into paths at points of $V$ and
at points of degree three or more in $T$.
Clearly, this decomposes $T$ into fewer than $2|V|$ paths,
and so the average length of a path is greater than
$\frac{w(T)}{2|V|}$.
We consider two cases:
\begin{enumerate}
\item
$B$ covers or cuts some edge of $T$, and the path $P$
on this edge continues in both direction a distance of
more than $(1+\veps/4)r$. In this case we can show that $B$ 
does not cover or cut an edge of any other path:
Otherwise we could modify $T$ by deleting from $P$ a
segment of length greater than $(1+\veps/4)r$ beginning 
inside $B$, and then repair $T$ by connecting the newly disjointed
path to the second path in $B$ using spanner edges,
at a cost of at most $(1+\veps/4)r$. This would reduce 
the weight of $T$, contradicting the optimality of $T$.

Then let $a,b$ be the first and last points of the single path
within the ball. By the stretch property of the spanner and
definition of the potential function,
the deleted segment accounted for a loss in potential of at most
$(1+\veps/4)d(a,b) + (r-d(x,a)) + (r-d(x,b))
\le (1+\veps/4)\diam(B)$, 
while $B$ itself contributes $\diam(B)$ to the potential.
Then $B$ accounts for a loss of at most $(\veps/4)\diam(B)$
to the potential of $C_i$, and all balls of this type together 
account for a loss of at most 
$\frac{\veps w(T)}{4}$ to $\Phi_i$.
\item
$B$ covers or cuts some edge of $T$, and the path $P$
on this edge continues in some direction a distance of
less than 
$(1+\veps/4)r \le 4(1+\veps/4)r_{i,j-1} \le 5r_{i,j}$. 
In this case the ball may contain multiple paths, so we
cannot charge to the ball -- we shall charge the path
instead.
Each path can be charged the above loss for each end,
amounting to at most
$2|V| \cdot 10 r_{i,j-1}
< \frac{\veps}{8 \cdot 2^t} r_{i,j}
\le \frac{\veps}{8 \cdot 2^t} \diam(V)
\le \frac{\veps w(T)}{8|V|}$.
The sum of losses in potential for all edges (fewer than
$2|V|$ edges) is less than $\frac{\veps w(T)}{4}$.
\end{enumerate}
We can now show that some ball of parameter $r_{i,j}$
results in a sufficient decrease in $\Phi_i$:
Take the point $p \in V$ for which a spanner path exits $V$
to a distance of at least $\diam(V)$, and let 
$p'$ the closest $\frac{\veps r_{i,j}}{4}$ net-point to
$p$. Let $q \subset V$ be the farthest point from $p'$, and set 
$r = (1+\veps/4)d(p',q) + 4r_{i,j-1}$,
and then $B(p',r)$ includes the edges of $T$ and all balls of 
parameter $r_{i,j-1}$ covering those edges.
It also includes a second path rooted at $p$ exiting the
ball, and as above that path (or rather, the remaining
edges of that path along with balls replacing those
edges), contributes
$(1-\veps/4)(r - \frac{\veps r_{i,j}}{4})$ 
to the potential decrease of $B(p',r)$. Then 
$\Phi_i(B(p',r)) 
\ge (1+\veps/2)\diam(V) + (1-\veps/4)(r- \frac{\veps r_{i,j}}{4}) -2r
\ge (2+\veps/4)(r-\frac{\veps r_{i,j}}{4}) - 2r
\ge \frac{\veps \diam(B)}{10}$, which
completes the proof of covering.

For the weight bound: 
If for some removed ball $B=B(x,r)$ we have that
$w(\MST(D)) = \veps^{-O(\ddim)}\diam(B)$,
then the bound is immediate by construction, 
as the potential decrease is at least $\veps \diam(B)$.
Assume then that $w(\MST(D))$ is greater than this.
In this case, it suffices to show that  
$\PD(B) = \Omega(w(\MST(D')))$, where $D' = D \cap B$: 
By the maximality of $\PD(B)$ and the packing property, this 
result would imply that 
$\PD(B) = \veps^{-\Theta(\ddim)} \Omega(w(\MST(D)))$,
from which the weight claim follows.
To prove the lower-bound on the potential decrease of $B$,
we use the same approach as in the proof of the covering property above:
Let $T$ be the minimum spanning tree of $D'$ restricted to
the light spanner edges of $X$. At the time when $B$
is chosen to be added to collection $C_i$,
some edges of $T$ may have been previously removed,
and in their place sit balls of parameter 
$r_{i,j-1}$ or less. As above, all these balls can be 
charged to the fewer than $|D'|$ paths of $T$.
Since $D'$ is a net, the average length of a path is at least
$\frac{w(\MST(D'))}{|D'|}
\ge \frac{|D'-1|\veps r/(4 \cdot 2^t)}{|D'|}
> \veps r/(8 \cdot 2^t)$.
To each path we charge the loss of its ends, which
is at most $\veps r/(16 \cdot 2^t)$.
So the potential decrease of $B$ is at least 
$w(\MST(D'))/2$,
which completes the proof.
\end{proof}

Theorem \ref{thm:banyan} follows by first noting that 
Theorem \ref{thm:decomp} implies a solution consisting
of only proper Steiner trees connected at the leaves --
and that each tree has at most $2^t$ leaves, that is
real points.
Applying Lemma \ref{lem:balls}
to $X,S$, we identify a set ${\mathcal D}$ of clusters
that form nets for all trees, and a Steiner tree
on an $\frac{\veps \diam(V)}{|V|}$ net of $V$ implies 
a $(1+\veps)$-approximate tree on $V$.
Applying Lemma \ref{lem:ball} on each of these clusters
to identify the subset $S' \subset S$, we achieve
the bounds of Theorem \ref{thm:banyan}.

\section{Inducing sparsity}\label{sec:sparsity}

In the previous section, we demonstrated that one can identify a 
subset $S' \subset S$ to which an approximate Steiner forest of 
$X$ may be retricted, and further that 
$w(\MST(X \cup S')) = \veps^{-O(d)}w(\MST(X))$. 
Then a spanner on $X \cup S'$ is a forest banyan for $X$ on $S'$
of weight $W_B \cdot w(\MST(X))$ where $W_B=\veps^{-O(\ddim)}$

In this section, we show that given a graph $G$ that constitutes
a forest banyan for $X$, we can replace $G$ by a set of sparse
graphs -- that is graphs in which every possible ball 
contains only a relatively light set of edges --
solve Steiner forest separately on each sparse graph,
and join the separate solutions into a single solution for
the original graph. The replacement by sparse graphs
will be necessary for running the dynamic program
of Section \ref{sec:dynamic}. 
Several results on TSP and Steiner forest in doubling spaces 
use a procedure to segment the space into sparse areas
\cite{BGK-16,BG-13,G-15,Chan-16,CJ-16},
and the following theorem is related to those results.

To prove the theorem below, we will utilize net-respecting graphs, 
introduced in \cite{BGK-16}:
A graph $G$ is said to be $\delta$-\emph{net respecting (NR)} relative to a given hierarchy
$\{H_i\}_{i=0}^{\lceil \log \diam(S) \rceil}$ and
value $0<\delta<\frac{1}{4}$, if for every edge in $G$, say of length $\ell$, both of its endpoints
belong to $H_i$ for $i$ such that $2^i \leq \delta \ell < 2^{i+1}$.
As demonstrated in \cite{BGK-16}, 
given any graph we can reroute it to be net-respecting as follows:
Consider each edge in increasing order. If edge $e_{x,y}$ is not net-respecting, remove
it and add the edge between the closest $i$-level net-points to $x$ and $y$, respectively.
Also connect $x,y$ to their respective closest $i$-level net-points via net-respecting paths.
It is easy to show that all these operations add a stretch factor of at most $1+O(\delta)$.

\begin{theorem}\label{thm:sparse-forest}
Given point-sets $X,S'$ ($m=|X|+|S'|$), 
a hierarchy for these points,
a set $L$ of terminal pairs for
$X$, and a connected graph $G=(V,E)$ on $X \cup S'$ 
of weight $w(G) = W_B \cdot w(\MST(X))$
to which a Steiner forest for $X$  is restricted; 
we can in time $O(m \log m)$ identify subsets 
$X_0,\ldots,X_p$
satisfying
$\cup_i X_i = X$,
new terminal assignments 
$L_0,\ldots,L_p$
and corresponding connected net-respecting graphs 
$G_0,\ldots,G_p \in {\mathcal G}$
(where $L_i$ contains only points in $X_i$ 
and $X_i \subset V_{G_i}$)
with the following properties:
\begin{enumerate}
\item
Sparsity: Each $G_i$ is $q = W_B^{O(\ddim)}$-sparse.
\item
Forest cost: 
Let $\OPT$ be the optimal Steiner forest for 
$X$ restricted to $G$, and 
$\OPT_i$ be the optimal Steiner forest for
$X_i$ restricted to $G_i$.
Then 
$\cup \OPT_i$ is a Steiner forest for $X$ on $S'$
(but not restricted to $G$), while
$\sum_i w(\OPT_i) 
\le \OPT + \frac{\veps}{W_B} w(G)
\le \OPT + \veps w(\MST(X))$.
\item
Total size: 
$|V| = \veps^{-O(\ddim)}m$.
\end{enumerate}
\end{theorem}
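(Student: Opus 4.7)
The plan is to decompose $G$ along a laminar family of ``heavy'' balls, run a separate Steiner forest subproblem inside each layer of the resulting nesting, and reroute terminal pairs that cross ball boundaries through portals. The high-level template follows the sparse partition approach of \cite{BG-13, Chan-16}, but it must be tuned so that the repair cost is only an $\veps/W_B$-fraction of $w(G)$, rather than a constant fraction.

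Step one is to make $G$ be $\delta$-net-respecting for $\delta = \Theta(\veps/W_B)$ relative to the given hierarchy, at the standard cost of a $(1+O(\delta))$ factor on $w(G)$. Fix the sparsity threshold $q = W_B^{\Theta(\ddim)}$ and call a ball $B(v,r)$ \emph{heavy} if $w(B^*(v,r) \cap E_G) > qr$. Scanning the hierarchy top-down through scales $r = 2^i$, I would greedily extract a maximal laminar family ${\mathcal B} = \{B_1,\dots,B_p\}$ of heavy balls centered at net-points, skipping any candidate already contained in a previously selected heavier ball. For each $B_i$, I form the subproblem graph $G_i$ on vertex set $(X \cup S') \cap B_i$, equipped with those $G$-edges that lie strictly inside $B_i$ but not strictly inside any ${\mathcal B}$-child of $B_i$; every ${\mathcal B}$-child $B_j \subsetneq B_i$ is represented in $G_i$ by a single boundary \emph{portal} $p_{ij}$, chosen as a net-point on $\partial B_j$. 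The outer graph $G_0$ is defined analogously on the complement of the outermost balls. The sets $X_i, L_i$ are produced by an inductive portal-routing: a terminal pair $(a,b)$ is assigned to the smallest ${\mathcal B}$-ball containing both endpoints (or to $G_0$); whenever one endpoint lies in a child $B_j$, that pair is split into $(a,p_{ij})$ (placed into $G_j$ and recursively propagated inward) and $(p_{ij},b)$ in $G_i$. A straightforward induction on laminar depth then shows that $\cup_i \OPT_i$ is a valid Steiner forest for $(X,L)$ on $S'$.

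Sparsity of each $G_i$ is forced by the laminar selection: any ball $B(u,r) \subset B_i$ not coinciding with a contracted child satisfies $w(B^*(u,r) \cap E_{G_i}) \le qr$ by maximality of ${\mathcal B}$, contracted children contribute no edges, and balls whose radius approaches $r_i$ are controlled by the doubling packing bound together with net-respectingness. The total vertex count $\sum_i |V_{G_i}| \le \veps^{-O(\ddim)} m$ follows because each net-point at scale $2^\ell$ can appear in $G_i$ only as a portal of one of its $\veps^{-O(\ddim)}$ ancestor heavy balls at comparable scale, by the doubling packing property applied to ${\mathcal B}$.

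The main obstacle is the cost bound $\sum_i w(\OPT_i) \le \OPT + (\veps/W_B)\,w(G)$. I would establish it by charging against $w(G)$: fix an optimum Steiner forest $F \subset G$ for $(X,L)$ and project it to each $G_i$ by restricting $F$ to the ring $B_i \setminus \bigcup_{B_j \subsetneq B_i} B_j$, then stitching the resulting loose ends to the appropriate portals via shortest net-respecting paths inside $G_i$. Each stitch at the boundary of a child $B_j$ of radius $r_j$ costs at most $O(\delta)\,r_j$, and since $B_j$ is heavy this is at most $O(\delta/q)$ times the edge weight inside $B_j$; summing over all boundary crossings yields a total repair cost of $O(\delta)\,w(G) = O(\veps/W_B)\,w(G) = O(\veps)\,w(\MST(X))$ using $w(G) \le W_B\cdot w(\MST(X))$. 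The most delicate point will be verifying that the projected-and-stitched solution in each $G_i$ actually realizes the inductively defined $L_i$ --- the portal placement is chosen precisely so that every $F$-path crossing a child boundary is aligned to the corresponding portal after stitching, and the inductive splitting of terminal pairs matches this alignment exactly.
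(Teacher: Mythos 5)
Your overall architecture (net-respecting rerouting, heavy balls at threshold $q r$, separate subproblems, terminal reassignment at the cut, charging the repair to the removed weight) matches the paper's, but the cost analysis has a genuine hole at its center. You claim each stitch at the boundary of a child $B_j$ of radius $r_j$ costs $O(\delta)\,r_j$. A loose end of $F$ on $\partial B_j$ and your single portal $p_{ij}$ can be at distance $\Theta(r_j)$ from each other, so a shortest (net-respecting or not) connecting path costs $\Theta(r_j)$, not $O(\delta r_j)$; the $(1+O(\delta))$ overhead of net-respecting rerouting does not make boundary points mutually close. With the correct per-stitch cost of $\Theta(r_j)$ and a heavy-ball weight of only $q r_j$, your charging gives a repair cost of $\Theta(k/q)$ times the removed weight, where $k$ is the number of boundary crossings --- and your proposal never bounds $k$. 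The paper handles both issues at once: it uses the inductive sparsity of lower-level balls plus net-respectingness to bound $w(B^*(p,r)\cap E_G)$ for $r = O(2^i)$, runs an averaging argument over radii $r \in [2\cdot 2^i, 4\cdot 2^i]$ to find one cutting only $2^{O(\ddim)}q$ short edges, retains those $f = 2^{O(\ddim)}q$ endpoints (plus the net-points carrying long edges) in \emph{both} sides, and connects them on each side by a minimum spanning tree. The decisive step you are missing is the invocation of Lemma \ref{lem:mst}: the two added trees weigh $O(f^{1-1/\ddim}\, r) = 2^{O(\ddim)} q^{1-1/\ddim} r$, and choosing $q = W_B^{c\,\ddim}$ for large $c$ makes $q^{-1/\ddim} = W_B^{-c}$ small enough that this is at most $(\veps/W_B)\, q r$, i.e.\ an $\veps/W_B$-fraction of the removed weight. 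This sublinear-in-$f$ MST bound is precisely why $q$ must be $W_B^{\Theta(\ddim)}$; without it no choice of threshold makes the repair affordable.

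Two further, smaller concerns. Contracting each child ball to a single portal distorts the outer problem: an optimal forest may route \emph{through} $B_j$, and forcing such through-traffic via one portal adds $\Theta(r_j)$ per traversal; the paper instead keeps the boundary/net points and the added spanning tree inside the residual graph $G$, so through-routes survive at the cost already charged. Also, your terminal splitting sends $(a,p_{ij})$ and $(p_{ij},b)$ to different subproblems, which is fine in spirit, but you should verify (as the paper does by replacing the removed endpoint with the ball center $p$ on both sides) that the union of the subproblem solutions together with the two added trees really reconnects every split pair --- this is where the retained duplicate points are essential.
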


\begin{proof}
For some $\delta = \Theta(\veps)$,
reroute the edges of $G$ to create a $\delta$-net-respecting 
graph $G'$ of weight at most $(1+\frac{\veps}{2})w(G)$.
Fix $q=(W_B)^{c\ddim}$ for a sufficiently large constant $c$ to
be specified later.
Beginning with the lowest level of the hierarchy
and iterating upwards, consider each $i$-level net-point:
let $p$ be the $i$-level net-point for which
$w(B^*(p,2 \cdot 2^i) \cap E_G)$ 
is maximum: If this weight is greater than $q 2^i$,
we have encountered a heavy ball that must 
be removed from $G$ in order to induce sparsity.

The removal of the heavy $i$-level ball proceeds as follows:
First, we will assume inductively that all $j$-level balls
$(j<i)$ are already $O(q)$-sparse,
and then we can show that the weight
of any ball $B^*(p,r) \cap E_G$ for $r=O(2^i)$ is at most
$2^{O(\ddim)}qr$:
This ball covers $2^{O(\ddim)}$ $q$-sparse balls of level $i-1$,
the contribution of edges smaller than $2^i$
to the weight of $B^*(p,r) \cap E_G$ is at most $2^{O(\ddim)}qr$.
Since $G$ is $O(\veps)$-net-respecting, the contribution
of edges of length $[2^i,2r]$ to the weight of $B^*(p,r) \cap E_G$ 
is at most $\veps^{-O(\ddim)}r$.
Having established a bound on the weight of $B^*(p,r) \cap E_G$,
an averaging argument implies that there must be specific radius
$r \in [2 \cdot 2^i,4 \cdot 2^i]$
for which $B(p,r)$ cuts at most 
$2^{O(\ddim)}q$ edges of $G$ of length $2^i$ or less
(where a cut edge has one vertex inside
the ball and the other outside of it).
Having determined the ball $B(p,r)$, 
we remove from $V_G$ all points in $B(p,r) \cap V_G$, 
except for the $2^{O(\ddim)}q$ points incident 
upon short edges of length at most $2^i$ exiting $B(p,r)$. 
We also retain in $G$ the net-points upon which 
may be incident the net-respecting edges longer than $2^i$. 
These amount to 
$f = 2^{O(\ddim)}q + \veps^{-O(\ddim)} = 2^{O(\ddim)}q$
total points.
All edges with removed endpoints are removed from $E_G$,
and to $E_G$ is added a net-respecting minimum spanning
tree for all remaining points of $B(p,r)$. 

Consider this ball $B(p,r)$ be the the $k$-th removed ball.
Create a new graph $G_k$, and place in it copies of all 
points and edges of $B(p,r)$, as well as a copy of the same 
minimum spanning tree that was added to $E_G$.
We note that by constrution $G_k$ is $2^{O(\ddim)}q$-sparse
(as the added minimum spanning tree can account for only 
$\veps^{-O(\ddim)}$ new edges in any ball).  
Create a terminal list $L_k$ for $G_k$, and
modify the terminal list $L$ of $G$: 
For any pair with both points removed from $G$,
remove that pair from $L$ and place it in $L_k$.
For any pair with exactly one point removed from $G$,
retain the pair in $L$ but with the removed point replaced
by $p$, and add to $L_k$ a copy of the pair with the retained
point replaced by $p$. 
Finally, add $G_k$ to ${\mathcal G}$, fix $X_k$ to be the real 
points in $G_k$, set $X=X-X_k$, and continue the procedure
on $G$, searching for the next heavy ball to be removed.

Clearly, upon removal of $G_k$ from $G$,
the two solutions for the Steiner problem on 
$G,L$ and on $G_k,L_k$ can be joined 
into a valid solution for the original problem, 
with the only additional forest weight being the two added spanning trees. 
Recall that the minimum possible weight of the removed ball is $qr$,
and by Lemma \ref{lem:mst}
we have that the weight of the two added spanning trees is at most
$2(1+\veps) \cdot 4 f^{1-1/\ddim(S)} \cdot 2r
= 2^{O(\ddim)} q^{1-1/\ddim(S)} r
= \frac{\veps qr}{W_B}$
(for a sufficiently large constant $c$ in the definition of $q$).
The additional weight is charged to the removed edges,
and the forest cost condition is satisfied.

The size follows by construction, since the removal of
a ball necesarily removes points from $F$, while the removed
ball may duplicate at most $\veps^{-O(\ddim)}$ points.
\end{proof}

\section{Clustering and dynamic programming}\label{sec:dynamic}

We can now give the dynamic program that given $X,S$ and the terminal 
pairs $L$ of $X$, computes an approximate Steiner forest for $X$ on $S$. 
We assume that we can compute a forest banyan for the space (Theorem \ref{thm:banyan}), 
and further that the forest banyan is net-respecting and $q$-sparse (Theorem \ref{thm:sparse-forest}).
We first describe a nested hierarchical clustering of the space, that is a clustering where a cluster in level 
$i$ completely contains the points of its ($i-1$)-level cluster children.
We prove that there exists
a Steiner forest with favorable properties with respect to the clustering, and then
show that there exists an efficient dynamic program for Steiner forest and the given
clustering. 

Before describing the clustering, we need some preliminary 
definitions and proofs:
Define annulus $A(x,r_1,r_2) = B(x,r_2) - B(x,r_1)$.
For parameters integer $k$, 
$\delta = \frac{r}{2^k}$ 
and $0 \le i \le k$,
define
$A_i(r,\delta) = A(x, r - \delta 2^i, r - \delta \lfloor 2^{i-1} \rfloor )$
and 
$\bar{A}_i(r,\delta) = A(x, r + \delta \lfloor 2^{i-1} \rfloor, r + \delta 2^i)$.
That is, $A_0(r,\delta)$ is the outer annulus of $B(x,r)$ of width $\delta$, and
$A_1(r,\delta)$ is internal to $A_0(r,\delta)$ with width $2 \delta$. 
$\bar{A}_0(r,\delta)$ is the annulus of width $\delta$ immediately external 
to the ball, and $\bar{A}_1(r,\delta)$ is external to 
$\bar{A}_0(r,\delta)$ and of width $2\delta$, etc.
Let $\alpha_i(r,\delta,b)$ and $\bar{\alpha}_i(r,\delta,b)$ 
be the smallest number of balls of radius $b$
that can cover the points of $A_i(r,\delta) \cap P$ and $\bar{A}_i(r,\delta) \cap P$,
respectively.
We can prove the following lemma:

\begin{lemma}\label{lem:admit}
Let $G$ be a $q$-sparse net-respecting connected graph on points $P$.
Then given any center point $x$, parameters $r > \gamma > 0$ 
and value $c \ge 1$,
at least half of the radii $r' \in [r,2r]$ satisfy all the following 
conditions:
\begin{enumerate}
\item
Cut edges:
$B(x,r')$ cuts $O(q)$ edges of $G$ of length $r'$ or less.
\item
Sparse border:
$\alpha_0(r',\gamma,\gamma) = 2^{O(\ddim)}q$
and
$\alpha_0(r',c\gamma,\gamma) = 2^{O(\ddim)}qc$.
\item
Admissibility:
$\sum_{i=0}^{\lfloor \log (r'/\gamma) \rfloor}
\alpha_i(r',c\gamma,\gamma)
+ \bar{\alpha}_i(r',c\gamma,\gamma)
\le 2^{O(\ddim)}qc \log (r'/\gamma)$.
\end{enumerate}
\end{lemma}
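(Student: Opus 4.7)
The plan is a standard averaging-and-union-bound argument over $r' \in [r, 2r]$: for each of items (1)--(3), integrate a suitable combinatorial count over $r'$, upper bound the integral using the $q$-sparsity of $G$ (and, for (2)--(3), its connectedness and net-respectingness), apply Markov's inequality to conclude each item holds for a $5/6$-fraction of radii, and take a union bound to obtain at least half of $r' \in [r,2r]$ satisfying all three simultaneously.

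For item (1), an edge $e=(u,v)$ of length $\ell$ is cut by $B(x,r')$ only when $r'$ lies strictly between $d(x,u)$ and $d(x,v)$, an interval of length at most $\ell$. Any edge with $\ell\le r'\le 2r$ that is cut also lies entirely inside $B(x,4r)$; by $q$-sparsity the total weight of such edges is at most $4qr$, so $\int_r^{2r}|\{e\text{ cut},|e|\le r'\}|\,dr' \le 4qr$, and Markov gives $O(q)$ cut edges for most radii. For items (2) and (3) I would first establish a density estimate: the number of $\gamma$-net-points of $P$ inside $B(x,O(r))$ is $2^{O(\ddim)}qr/\gamma$. Any $N$ such $\gamma$-separated points require a Steiner tree in $G$ of metric weight $\Omega(N\gamma)$, and net-respectingness with $\delta<\tfrac14$ confines this tree inside a slightly larger concentric ball where $q$-sparsity caps its weight at $O(qr)$. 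Integrating the indicator that a net-point lies in $A_0(r',\gamma)$ (resp.\ $A_0(r',c\gamma)$) gives measure $\gamma$ (resp.\ $c\gamma$) per net-point, producing the bounds $\alpha_0(r',\gamma,\gamma)=2^{O(\ddim)}q$ and $\alpha_0(r',c\gamma,\gamma)=2^{O(\ddim)}qc$ for most $r'$ by Markov.

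Item (3) uses the same density estimate, combined with the structural observation that for fixed $r'$ the $A_i(r',c\gamma,\gamma)$ partition the shell $A(x,\max\{0,r'-c\gamma 2^I\},r')$ (and the $\bar A_i$ partition the matching outer shell), so each $\gamma$-net-point contributes to at most one $\alpha_i$ and one $\bar\alpha_i$ per $r'$. A per-scale averaging, in which net-points at radial depth $\Theta(c\gamma 2^i)$ from $r'$ contribute $\Theta(c\gamma 2^i)$ measure of contributing radii, balanced against the $2^{O(\ddim)}qc$ expected scale-$i$ marginal from the density estimate, gives an average per-scale bound of $2^{O(\ddim)}qc$; summing over the $\log(r'/\gamma)$ values of $i$ and applying Markov to the total sum $\sum_i(\alpha_i+\bar\alpha_i)$ yields the admissibility bound for most $r'$. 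The main obstacle is the density estimate (and its per-scale refinement): ensuring that the Steiner tree of $\gamma$-separated points inside $B(x,O(r))$ remains in a bounded concentric ball so that sparsity applies. Net-respectingness is essential here, since without it long shortcut edges could route the connecting weight outside $B(x,O(r))$ and defeat the sparsity-based weight bound. Once this estimate is rigorously in place, the three averaging arguments and the union bound are mechanical.
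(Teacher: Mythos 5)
Your overall strategy---integrate a combinatorial count over $r'\in[r,2r]$, bound the integral via sparsity, apply Markov to each item and union-bound---is exactly the paper's, and your treatment of item (1) is correct. The gaps are in how you justify the covering/density estimate that drives items (2) and (3). You assert that $N$ $\gamma$-separated points of $P$ in $B(x,O(r))$ force $\Omega(N\gamma)$ weight of $G$ inside a concentric ball because ``net-respectingness confines this tree inside a slightly larger concentric ball.'' That step is false: net-respectingness only constrains the \emph{endpoints} of an edge to be net points at the scale of its length; it does not prevent two points of $B(x,4r)$ from being joined in $G$ only through edges of length $\gg r$ that route all the connecting weight far outside any $O(r)$-ball, so sparsity of $B(x,O(r))$ bounds nothing about those points. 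The correct argument (the paper's) looks at the connected components of the subgraph of $G$ induced on $B(x,6r)$: components of weight $\ge\ell$ are covered by $2^{O(\ddim)}qr/\ell$ balls of radius $\ell$ using sparsity, while each remaining light component meeting $B(x,4r)$ must attach to the rest of $G$ through an edge of length $\ge r$, and net-respectingness forces that edge to be incident on a $\veps r$-net point, of which there are only $\veps^{-O(\ddim)}$ in $B(x,6r)$. This yields the covering number $2^{O(\ddim)}qr/\ell+\veps^{-O(\ddim)}$; the additive term, which your estimate drops, is precisely the contribution of the configuration your ``confinement'' claim overlooks.

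There is a second problem in your item (3) accounting. You count radius-$\gamma$ covering balls at every scale $i$; but a scale-$i$ annulus has width $\Theta(c\gamma 2^i)$, so its expected radius-$\gamma$ covering number is $2^{O(\ddim)}\frac{qr}{\gamma}\cdot\frac{c\gamma 2^i}{r}=2^{O(\ddim)}qc\,2^i$, which grows with $i$ and sums to $\Theta(2^{O(\ddim)}qcr'/\gamma)$, not $2^{O(\ddim)}qc\log(r'/\gamma)$. The bound only becomes scale-independent if the scale-$i$ annulus is covered by balls of radius $\gamma 2^i$ (matching its width up to the factor $c$), which is what the paper's proof does and what the subsequent clustering construction uses (child clusters at depth $\Theta(c2^k)$ from the boundary have radius $\Theta(2^k)$); admittedly the lemma's notation $\alpha_i(r',c\gamma,\gamma)$ invites your literal reading, but under that reading the claimed admissibility bound is not achievable by this argument.
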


\begin{proof}
For the first item, by sparsity the ball $B(x,4r)$ has weight
$O(qr)$, and so it has $O(qr)$ edges of length in the range $[1,2r]$,
meaning that any constant fraction (less than 1) of radii in the stated range
cut at most $O(q)$ edges.

For the second item and third items:
We can show that all points of $B(x,4r)$ can be covered by at most 
$2^{O(\ddim)}(qr/\ell) + \veps^{-O(\ddim)} = 2^{O(\ddim)}(qr/\ell)$ 
balls of radius $\ell < r$.
To see this, consider the components of $B(x,6r) \cap G$:
Since $G$ is $q$-sparse, the components with weight greater than
$\ell$ can all be covered greedily by at most 
$2^{O(\ddim)}(qr/\ell)$ balls of radius $\ell$.
For the smaller components, we may restrict our attention to
those intersecting $B(x,4r)$, and for these to be connected 
in $G$, it must be that they are incident upon an edge of 
length at least $r$ exiting $B(x,6r)$.
Since $G$ is net-respecting, edges of length 
$6r-4r-\ell \ge r$ or greater
are incident on $\veps r$-net points, of which there are
$\veps^{-O(\ddim)}$ within $B(x,6r)$; it follows that there are at most $\veps^{-O(\ddim)}$ 
components of this type, and each can be covered by a single
ball. 

A random choice of $r'$ induces random annuli. 
It follows that the points in a random annulus of width 
$c\gamma 2^i$ within $B(x,4r)$ can be covered by 
$2^{O(\ddim)}\frac{qr}{\gamma 2^i} \cdot \frac{c \gamma 2^i}{r} + \veps^{-O(\ddim)} = 2^{O(\ddim)}qc$ 
balls of radius $\ell = \gamma 2^i$ in expectation,
and the second item follows from Markov's inequality.
The third item follows by linearity of expection and Markov's inequality.
\end{proof}

\subsection{Clustering}

Our clustering is novel in the type of clusters it creates:
Each cluster will have at most 
$\veps^{-O(\ddim)} q^2 \log \log n$
child clusters with various radii. The child clusters near 
the boundary of the cluster will all have small radius, while
the child clusters farther in will have progressively larger radii 
(and hence Lemma \ref{lem:admit} above can be used to bound
the total number of child clusters necessary to cover the parent).

The clustering is constructed as follows:
Given sets $X \subset S$ and $S'$ along with a hierarchy
for these points, $0 < \veps<1$,
and a $q$-sparse banyan $G$ of weight $W_B \cdot w(\MST(X))$
(where $W_B = \veps^{-O(\ddim)}$ and 
$q = (W_B)^{O(\ddim)} = \veps^{-O(\ddim^2)}$),
fix $s$ to be the smallest power of 2 greater than 
$\left(\frac{2^{O(\ddim)}q W_B}{\veps} \cdot \frac{\log n}{\log \log n} \right)$.
We first build a hierarchy of {\em primary} levels bottom-up:
For $j= 0,1,\ldots$ create a $s^j$-net cluster 
(equivalently, a $2^i$-net cluster $i = j \log s$) thus:
For each $s^j$-net point $p$, choose a radius 
$r \in [2^i,2 \cdot 2^i]$ satisfying the guarantees of Lemma \ref{lem:admit}
for parameters $c = \veps^{-O(\ddim)}q$
and $\gamma = s^{j-1}$.
Define the $s^j$-net {\em cluster} centered at $p$ to be the union of all
$s^{j-1}$-net clusters (not previously claimed by a different $i$ level cluster) 
whose center points are found in $B(p,r)$.
We will call $B(p,r)$ a primary ball, and it creates a primary cluster.
(Note that a cluster is also determined by the $2^{O(\ddim)}$ balls 
of the same level which cut into its area before the cluster's ball 
was chosen.)
This gives a hierarchical clustering of 
$L = O(\log_s n)
=  ((\log n)/ \log s)
= O((\log n)/ \log \log n) 
= \frac{\veps}{2^{O(\ddim)}q W_B}s$
primary levels.\footnote{
As our algorithm will run in time exponential in $q$,
we have taken $q = O(\log n)$.}

For any $j$, let levels $i$ for $(j-1)\log s < i < j \log s$ be
the secondary levels. Secondary levels contain secondary clusters,
which are built recursively in a top-down fashion.
A primary or secondary cluster in level $(j-1)\log s + 1 < i \le j \log s$
is partitioned into $\veps^{-O(\ddim)} q^2 \log \log n$
child clusters as follows: For the $i$-level cluster, consider decreasing index 
$k = i-1,\ldots,(j-1)\log s + 1$. 
As above, identify in turn $k$-level balls with respective radii in 
the range $[2^k, 2\cdot 2^k]$ 
which satisfy the guarantees of Lemma \ref{lem:admit}
with respect to parameters $c$ as above and $\gamma = s^{j-1}$.
Each $k$-level ball must also satisfy that its center
$p$ is at distance at least $c2^k$ from the {\em boundary} of its parent cluster, 
meaning from the closest point not contained in the parent.
The new $k$-level cluster takes all points of the parent parent
that are within the $k$-level cluster (and have not yet been claimed
by a different $k$-level cluster).
After identifying all $k$-level balls, the procedure continues to 
identify $(k-1)$-balls -- in effect, the algorithm first adds large
child clusters near the center of the parent, then adds progressively
smaller clusters as it approaches the boundary. 
The new secondary clusters are then partitioned recursively, and so it follows 
from Lemma \ref{lem:admit}(iii) (with the above value of $c$) that any cluster has
$\veps^{-O(\ddim)} qc \log s = \veps^{-O(\ddim)} q^2 \log \log n$
children.

We now describe how portals are assigned to each cluster:
For an $i$-level cluster formed by an $r$-radius ball for 
$r \in [2^i, 2 \cdot 2^i]$, 
let its portals include those points incident upon edges of length 
at most $r$ cut by the forming ball, and also all
$\veps2^i$-net-points in the cluster
(upon which may be incident net-respecting edges of length 
greater than $2 \cdot 2^i$).
The cut edges include those that are directly cut by the ball,
and by Lemma \ref{lem:admit}(i) there are $O(q)$ such edges. Since a cluster is
formed by at most $2^O(\ddim)$ $i$-level balls there may be at most
$2^{O(\ddim)} q$ such edges cutting the cluster.
such edges. It also includes the edges cut by smaller primary
balls taken by the larger ball (if the other endpoint is not
in the larger ball). In this case however,
if the $i$-level cluster is also a primary cluster, 
it adds only the center of the smaller primary ball as 
a portal, and not the cut endpoint.
Together, these account for 
$2^{O(\ddim)} q + \veps^{-O(\ddim)} = 2^{O(\ddim)} q$ 
total portals.


We can show that there exists a forest with favorable properties with respect to the above clustering. 
The primary consistency property below is similar to that of Borradaile \etal \cite{BKM-15}, 
while the secondary consistency property is related to one found in Chan \etal \cite{Chan-16}.

\begin{lemma}\label{lem:clustering}
Given sets $X,S' \subset S$, $0 < \veps<1$, a list of terminal pairs, 
and a $q$-sparse net-respecting forest banyan $G$ of weight $W_B \cdot w(\MST(X))$ 
to which the Steiner forest is restricted, 
construct the above clustering with the given portals.
Then there exists a Steiner forest $F$ (not restricted to $G$) satisfying:

\begin{enumerate}
\item
Validity:
$F$ enters and exits between sibling clusters, 
and only via the respective portals of these clusters.
\item
Primary consistency:
Let $A$ be a $j\log s$-level primary cluster ancestral to some
$(j-1)\log s$-level primary cluster $B$. Then all paths of $F$ 
incident on points of $B$ and exiting ancestral cluster $A$
must be connected inside $B$, so that they belong to the same tree
in $F$.
\item
Secondary consistency:
Let (primary or secondary) cluster $A$ be the parent of secondary 
cluster $B$. If $B$ contains a terminal whose pair is outside $A$,
then all paths of $F$ incident on points of $B$ and exiting parent
$A$ must be connected in some ancestral cluster 
(but not necessarily inside $A$ or $B$), 
so that they belong to the same tree in $F$.
\item
Weight: 
$w(F) = \OPT + \frac{\veps w(G)}{W_B} \le \OPT + \veps w(\MST(X))$.
\end{enumerate}
\end{lemma}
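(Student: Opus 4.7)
The plan is to start with $F_0 := \OPT$, the optimal Steiner forest on $G$, and produce $F$ through a bottom-up sequence of local reroutings and bundlings that successively enforce validity, primary consistency, and secondary consistency, charging the added weight either to portions of $F_0$ or directly to $w(G)$ in a level-by-level scheme.

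For validity I would observe that an edge of $G$ of length $\ell$ crossing an $i$-level cluster boundary either has $\ell \le 2 \cdot 2^i$ (so its endpoints were placed in the portal set of the forming ball by the ``cut edge'' rule) or has $\ell > 2 \cdot 2^i$ (so by the $\delta$-net-respecting property its endpoints lie in the $\veps 2^i$-net and are again portals, for $\delta = \Theta(\veps)$). Any residual edges of $F_0$ that cross a boundary outside a portal can be rerouted to the nearest $\veps 2^i$-net portal at an $O(\veps)$ multiplicative cost summed across levels, which totals $O(\veps)\cdot w(G)$.

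For primary consistency, for each $j\log s$-level primary cluster $A$ and each of its $(j-1)\log s$-level primary children $B$, let $Q_B$ be the portals of $B$ at which paths of $F_0$ entering $B$ from outside $A$ land. I would add the minimum spanning tree of $Q_B$ inside $B$; since $|Q_B| \le 2^{O(\ddim)} q$ and $\diam(B) = O(s^{j-1})$, Lemma \ref{lem:mst} bounds the added weight per such $B$ by $P_j = 2^{O(\ddim)} q^{1-1/\ddim(S)} s^{j-1}$. Each path contributing to $Q_B$ has weight at least $\Omega(s^j)$ inside $A$ before reaching $B$, so I would charge $P_j$ against this traversal weight at a rate $P_j/s^j = O(q/s)$; by the choice of $s \ge \frac{2^{O(\ddim)} q W_B}{\veps}\cdot\frac{\log n}{\log \log n}$, this rate is at most $\frac{\veps \log \log n}{W_B \log n}$. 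Summing over all $L = O(\log n/\log\log n)$ primary levels bounds the total primary-consistency cost by $\frac{\veps}{W_B} w(F_0) \le \frac{\veps}{W_B} w(G)$.

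For secondary consistency, analogous bundling is performed for each secondary parent-child pair $(A,B)$ such that $B$ contains a terminal whose pair lies outside $A$: this hypothesis guarantees an $F_0$-path crossing from $B$ to outside $A$, providing charge of at least $\Omega(2^{i-1})$ per $i$-level parent $A$. Inside each primary super-level there are only $\log s$ secondary levels, and Lemma \ref{lem:admit}(iii) caps the total portal/boundary structure across these levels by $2^{O(\ddim)} qc\log s$. The main obstacle is the secondary-consistency bookkeeping: the per-level contraction ratio is only $2$ rather than $s$, so the $\log s = O(\log\log n)$ factor from admissibility must be absorbed exactly by the slack in the choice of $s$, and one must be careful that bundling edges inside an ancestor do not themselves violate consistency at lower levels (handled by performing the bundlings bottom-up and routing them through the already-fixed portal structure). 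Combining the three modifications yields $F$ satisfying the three structural properties with $w(F) \le \OPT + \frac{\veps}{W_B} w(G) \le \OPT + \veps w(\MST(X))$, as claimed.
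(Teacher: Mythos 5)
Your overall strategy (start from the optimal forest restricted to $G$, reroute for validity, then add bundling edges for the two consistency properties, charging the additions at rate $\veps/W_B$) matches the paper's, and your validity argument is essentially right, though you miss the one genuinely residual case the paper must handle: when a large primary ball absorbs a smaller primary ball, only the smaller ball's \emph{center} is retained as a portal, so edges cut by the smaller ball must be rerouted through that center at cost $O(s^{j-1})$, which the paper charges to $w(G)$ restricted to the large primary ball (weight $\Omega(s^j)$ by connectivity), spread over the $2^{O(\ddim)}L$ balls containing each edge. Your primary-consistency charging, however, rests on a false premise: you claim every path contributing to $Q_B$ has weight $\Omega(s^j)$ inside $A$, but a $(j-1)\log s$-level primary child $B$ is included in $A$ merely because its \emph{center} lies in $A$'s forming ball, so $B$ may sit flush against (or protrude past) $A$'s boundary and a path from $B$ exiting $A$ can have length only $O(s^{j-1})$. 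The paper instead observes that at most $2^{O(\ddim)}q$ distinct trees exit $A$ (one per portal), so at most that many merges of cost $O(s^{j-1})$ each are needed over all children, and charges the total $2^{O(\ddim)}q\,s^{j-1}\le \veps s^j/(2^{\Theta(\ddim)}L)$ to $w(G)$ inside the primary ball, exactly as in the validity step.

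The more serious gap is secondary consistency, which you explicitly leave as "the main obstacle." The paper does not do per-level bundling with a $\log s$ bookkeeping at all. It defines a global merging rule on trees: whenever a tree $T_G$ of the (validity-repaired) forest $F_G'$ comes within distance $\min\{\diam(T_G),\diam(T)\}/c$ of a tree $T$ of the evolving forest $F$, with $c=\veps^{-\Theta(\ddim)}q$, the two are joined at the portals of the smallest cluster containing points of both. Secondary consistency then follows from a structural feature of the clustering that your proposal never invokes: a $k$-level secondary child $B$ is required to have its center at distance at least $c2^k$ from its parent's boundary, so any two trees both incident on $B$ and exiting $A$ each have diameter $\Omega(c2^k)$ while lying within distance $O(2^k)$ of one another, forcing the merge rule to fire. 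The weight of the merge edges is then controlled not by a level-by-level charge but by an inductive argument using the $O(q)$-sparsity of $F_G'$: either $T$ has a long edge near $T_G$ (only $\veps^{-O(\ddim)}$ such trees exist), or the weight of $T\cap F_G'$ near $T_G$ is at least $c$ times the merge edge, and sparsity caps the total nearby weight by $O(q\,w(T_G))$, giving a total charge of $O(q/c)\,w(T_G)<\veps\,w(T_G)$. Without this mechanism -- both the boundary-distance condition that triggers the merges and the sparsity-based amortization that pays for them -- your proof of item 3 does not go through.
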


\begin{proof}
For the first item, we show that the optimal forest $F_G$ 
restricted to graph $G$ can be slightly modified to satisfy this 
property. For every edge in $F$, if a cluster containing the
endpoint of the edge has that endpoint as a portal, then
$F$ is valid with respect to that edge. Most clusters which 
cut an edge have the endpoint as a portal - whether that 
edge was long with respect to the cluster radius (in which case
it is incident on a net-point portal), or whether the edge was
short, in which case the cluster added that endpoint as a 
portal. The exception is in the case where a $j\log s$-level 
primary cluster cuts a $(j-1)$-level primal cluster,
and then adds only its center as a portal.
In this case, we can reroute the edge through this center at a cost of
$O(2^{(j-1)\log s}) = O(s^{j-1})$.
We will charge this cost to the intersection of $G$ 
and the larger primary ball. This intersection is also present
in $2^{O(\ddim)}$ overlapping balls in each of $L$ levels,
so it can be charged at most 
$\frac{\veps s^j}{2^{O(\ddim)}L} \ge s^{j-1}$, as desired
(for appropriate choice of constants in the definition of $s$).

Secondary consistency: 
To achieve this property, we take $F_G$
along with the additional edges added for the first item. 
Call this forest $F_G'$. We note that after the addition of these 
edges the forest is still $O(q)$-sparse, and further that
the rerouted paths can be net-respecting, so 
$F_G'$ remains net-respecting.
We will add edges to $F_G'$ to create graph $F$ satisfying this property,
and then we will differentiate between original trees in $F_G'$ and
new trees incrementally formed in $F$: If a tree $T_G$ in $F_G'$ 
is within distance
$\frac{\min\{\diam(T_g), \diam(T) \} }{c} 
= \frac{\min\{\diam(T_g), \diam(T) \}}{\veps^{-\Theta(\ddim)} q}$ 
of a tree $T$ in $F$, 
then locate the smallest cluster
containing points of both $T_G,T$. If the diameter of the cluster is less than
$\frac{\min\{\diam(T_g), \diam(T) \} }{c}$ then
connect $T$ and $T_G$ at the portals throug which they exit this cluster.

It is immediate that the forest remains valid, and clearly $F$ satisfies the 
secondary consistency property. It remains only to bound the weight of added edges, 
and we will charge the new edges against $T_G$. 
We assume by induction that all new edges charged to a tree of $F_G'$
together weigh at most $\veps$ times the weight of the tree.
Consider tree $T$ and edge $e$ connecting $T_G$ and $T$.
If $T \cap F_G'$ has a long edge of length $\veps \diam(T_G)$
within distance $2 \diam(T_G)$ of $T_G$, then we charge the 
edge directly to $T_G$. There can only be $\veps^{-O(\ddim)}$
trees of $F$ with this property, so $T_G$ is charged at most
$\frac{\veps^{-O(\ddim)}w(T_G)}{c} < \veps w(T_G)$.
Assume then that $T \cap F_G'$ has no long edges within
distance $2 \diam(T_G)$ of $T_G$.
By the induction assumption, every new edge in $T$ is
incident on a tree of $F_G'$ of much greater weight,
and so by construction the weight of edges of
$T \cap F_G'$ within distance $2 \diam(T_G)$ of $T_G$
is at least $c w(e)$. But $F_G'$ is $O(q)$-sparse,
the weight of all edges of $F_G'$ within 
distance $2 \diam(T_G)$ of $T_G$ is 
$O(q w(T_G))$, and so the sum of the weights of all edges
charged to $T_G$ is at most 
$\frac{O(q w(T_G))}{c} < \veps w(T_G)$.

The proof of primary consistency is as follows:
Since a $s^j$-net cluster has only $2^{O(\ddim)}q$ portals, 
only $2^{O(\ddim)}q$ distinct trees exit the cluster.
Then the total cost of connecting trees that pass through common 
$s^{j-1}$-net child clusters at the exit portals, is at most
$2^{O(\ddim)}q s^{j-1} \le \frac{\veps s^j}{2^{\Theta(\ddim)}L}$,
as desired.
\end{proof}

\subsection{Dynamic program}

Given the clustering of the previous section, we present a dynamic program to
solve the Steiner forest problem. Specifically, we prove the following:

\begin{theorem}\label{thm:metric-forest}
Given the input and guarantees of Lemma \ref{lem:clustering},
we can find the optimal forest restricted to the validity
condition in time
$n \cdot 2^{q^{O(1)} (\log \log n)^2} 
= n \cdot 2^{\veps^{-O(\ddim^2)} (\log \log n)^2}$.
\end{theorem}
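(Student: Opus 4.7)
The plan is to run a bottom-up dynamic program on the hierarchical clustering produced by Lemma \ref{lem:clustering}. For each cluster $C$ I maintain a table indexed by a \emph{state} $\sigma$, storing the minimum weight of a valid partial forest living inside $C$ that realizes $\sigma$; the optimal solution is then read off at the root by selecting the cheapest state consistent with having all terminal pairs served. Because Lemma \ref{lem:clustering} already exhibits a near-optimal forest obeying validity and the two consistency properties, it suffices to optimize over such forests.

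The state at a cluster $C$ with $p = 2^{O(\ddim)} q$ portals consists of: (i) the subset of portals incident to forest edges crossing $\partial C$; (ii) a partition of this subset into equivalence classes, where two exit-portals are grouped iff they are connected inside $C$; and (iii) for each class, a label describing which family of terminals this tree-piece must eventually join outside $C$. The labels are the subtle ingredient, and here the consistency conditions of Lemma \ref{lem:clustering} do the work. At a primary-cluster boundary, primary consistency forces all exit-portals of a primary child into a single class, so labels collapse to essentially one identifier per primary child. Within a primary block of $\log s = \Theta(\log \log n)$ secondary levels, secondary consistency allows a label to be encoded as a tuple naming, for each ancestral secondary level, which ancestor is responsible for merging the class; each coordinate has $q^{O(1)}$ options, so the state count per cluster is $2^{q^{O(1)} \log \log n}$.

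The transition at a cluster $C$ with $c = \veps^{-O(\ddim)} q^2 \log \log n$ children is implemented as an inner DP that absorbs the children one at a time into an accumulated partial state of the same form as $\sigma$. For each of the $2^{q^{O(1)} \log \log n}$ accumulated states and each of the $2^{q^{O(1)} \log \log n}$ states of the next child, I verify validity (edges cross only via mutual portals at a matching scale, using the net-respecting property of $G$) and the two consistency properties, then update the accumulated cost. Each of the $c$ absorption steps costs $2^{q^{O(1)} \log \log n}$, so processing all children for a fixed target state of $C$ costs $2^{q^{O(1)} \log \log n}$, and summing over the $2^{q^{O(1)} \log \log n}$ target states gives a per-cluster cost of $2^{q^{O(1)} (\log \log n)^2}$. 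Since the clustering has $O(n)$ clusters in total, the overall running time is $n \cdot 2^{q^{O(1)} (\log \log n)^2} = n \cdot 2^{\veps^{-O(\ddim^2)} (\log \log n)^2}$, matching the claim.

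The main obstacle will be engineering the label alphabet so that it is simultaneously small enough to fit the stated budget and expressive enough to correctly enforce secondary consistency across the $\log s$ secondary levels of a primary block without double-counting trees or losing validity. The key leverage is that labels are \emph{refreshed} at every primary-cluster boundary by primary consistency, so label propagation never spans more than one primary block; the two $\log \log n$ factors in the final exponent then correspond, respectively, to the length of a single such label-tuple (one entry per secondary level within a primary block) and to the depth of the state interaction when combining $q^{O(1)} \log \log n$ children. With this budget, correctness and the runtime bound follow from the state and transition accounting above, completing the proof sketch.
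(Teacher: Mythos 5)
Your overall architecture --- a bottom-up DP over the clustering of Lemma \ref{lem:clustering} whose states record which portals are crossed, how they are partitioned into internally-connected classes, and what external connection obligations each class carries --- is the same as the paper's. But there are two concrete gaps.

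First, your state is missing information that the reachability checks need. The paper's configuration records, for each portal of $C$, the \emph{list of child clusters of $C$} with a path exiting through that portal, and for a secondary cluster it additionally records, for each primary child $B$ and each portal of $B$, the grandchild clusters of $C$ reached through that portal. This descendant-tracking is what lets the program verify single-terminal and terminal-pair reachability: when a terminal sits in a (grand)child cluster and its pair lies outside $C$, one must certify that some path exiting $C$ actually reaches the specific descendant cluster containing that terminal, not merely that some tree touches some portal. Your items (i)--(iii) encode connectivity among portals and labels for future merging, but not which descendants each crossing path serves, so the reachability conditions cannot be checked. Relatedly, your label alphabet (a tuple naming, per secondary level, the ancestor responsible for merging) is exactly the piece you defer as ``the main obstacle''; the paper sidesteps it with a simpler encoding --- per portal pair, a boolean ``connected inside $C$'' and a boolean ``must be connected outside $C$'' --- propagated upward and discharged at ancestors.

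Second, your transition and its accounting do not cohere. You absorb children one at a time, which, if sound, would give per-cluster cost $2^{q^{O(1)}\log\log n}\cdot 2^{q^{O(1)}\log\log n}\cdot c = 2^{q^{O(1)}\log\log n}$, \emph{not} $2^{q^{O(1)}(\log\log n)^2}$; multiplying two quantities of the form $2^{q^{O(1)}\log\log n}$ does not square the $\log\log n$. The paper's $(\log\log n)^2$ arises for a different reason: it enumerates \emph{all combinations} of configurations over the $c = \veps^{-O(\ddim)}q^2\log\log n$ children, i.e.\ $(2^{q^{O(1)}\log\log n})^{c} = 2^{q^{O(1)}(\log\log n)^2}$ candidate combinations per cluster, because the validity checks (primary/secondary consistency and the two reachability conditions) are stated as properties of the whole combination, several of them coupling two or more children at once. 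If you want the incremental absorption to replace this enumeration, you must prove that your accumulated state is a sufficient statistic for all four checks --- in particular for terminal-pair reachability between two children absorbed at different times --- and you have not done so. As written, the proposal neither matches the paper's argument for the stated bound nor establishes a correct alternative.
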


For metric spaces, Theorem \ref{thm:main} follows from 
Theorem \ref{thm:metric-forest} in conjunction with 
Theorem \ref{thm:sparse-forest} (sparsity),
along with Theorem 2.2 of \cite{EKM-12} (approximation to $\MST$).
The rest of this section presents the proof of
Theorem \ref{thm:metric-forest}.

\paragraph{Cluster description.}
As usual, the dynamic program will run bottom-up, computing the
optimal solution for each sub-cluster {\em configuration}, and using
these to compute optimal solutions for each parent configuration.
Each cluster, its children and portals have already been fixed.
A configuration for a primary cluster $C$ consists of:

\begin{enumerate}
\item
For each portal of $C$, a list of child clusters of $C$ with 
a path exiting that portal.
\item
For every portal pair of $C$, a boolean value indicating whether they
are connected in $C$.
\item
For every portal pair of $C$ not connected in $C$, a boolean value
indicating whether they must be connected outside of $C$.
\end{enumerate}

Since a cluster has $2^{O(\ddim)}q$ portals and 
$\veps^{-O(\ddim)} q^2 \log \log n$ child clusters,
a primary cluster may have $2^{q^{O(1)} \log \log n}$ 
different configurations. A secondary cluster $C$ has
another item in its configuration: For every primary
child $B$ of $C$, for every portal of $B$ a list of 
child clusters of $B$ with a path exiting $B$ via
that portal.
Since $C$ has at most 
$\veps^{-O(\ddim)} q^2$ primary child clusters
(Lemma \ref{lem:admit}(ii)), the number of 
configurations for a secondary cluster is similar
to that of the primary clusters.

\paragraph{Program execution.}
In executing the program, we assume the existence of a forest 
satisfying the guarantees of Lemma \ref{lem:clustering}, although
we will not necessarily enforce that the forest always 
obey all the requirements.

The computation of configurations for a cluster $C$ is as follows.
We try each combination of configurations for the child clusters of $C$,
and for each such combination we try all possible edge combinations 
connecting portals of $C$ and of its children. 
As the cluster has $q^{O(1)}\log \log n$ children, 
and each child have $2^{q^{O(1)}\log \log n}$ 
possible configurations, this amounts to $2^{q^{O(1)}(\log \log n)^2}$ work
per cluster.

We then check the
combination for {\em validity}, meaning we reject it if it does not obey 
the following conditions.
(Note that for the pair reachability condition below, 
we have not yet described how to to test whether 
a terminal pair is connected; this will be addressed below.)
\begin{enumerate}
\item
Primary consistency:
For a primary parent with a primary child, if multiple portals
of the child are incident upon disjoint paths exiting the parent, 
we require that these portals were marked in the child configuration
as internally connected.
(Note that in the program execution we assert primary consistency 
only when both father and child are primary clusters.)
\item
Secondary consistency:
For a child portal pair marked as needing to be connected
outside the child, these portals must either be connected in $C$,
or each have paths exiting $C$.
\item
Single terminal reachability:
For a child cluster $B$ with a terminal whose pair is outside 
$C$, there must be a path from a portal of $C$ to some portal of $B$ that
is marked as connected to the child cluster of $B$ containing the
terminal.
\item
Terminal pair reachability:
A terminal pair in child clusters $A,B$ must either be connected 
inside $C$, or else there must be paths from portals of $C$ to
portals of $A,B$ that are marked as connected to the grandchild
clusters containing the respective terminals.
\end{enumerate}
Of course, we do not permit the root cluster to have portals marked
as needing to be connected outside the cluster.

Having concluded that a combination is valid, we compute 
the configuration of $C$ implied by the combination:
For each portal of $C$, we compute the list of child clusters of $C$ with 
a path exiting that portal.
For each portal pair of $C$, we compute whether they are connected in $C$.
For a secondary cluster, we also record for each primary child cluster
the grandchild clusters connected to each of the child cluster portals. 
We must also mark some portals of $C$ as needing to be connected
ouside $C$. This occurs when a pair of portals of a child of $C$ are marked 
as needed to be connected outside the child, but are not connected
inside of $C$ and instead have disjoint paths reaching a pair of portals
of $C$. This also occurs in the pair reachability case mentioned above, 
when a portal pair is not connected in $C$ but reaches portals of
$C$ via disjoint paths; then these portals of $C$ must be marked as
connected.

It remains to explain how we verify whether a terminal pair residing
in child clusters $A,B$ is connected in $C$. We first note that the consistency and reachability conditions 
inductively imply that portals of $A$ or $B$ 
that are marked 
as being connected to the grandchild cluster containing the terminals,
are either connected to the terminals via internal paths, or marked
as needing to be connected to a different child portal that is connected 
to the terminal via an internal path.
Given $A,B$, we check whether the path connecting them is incident
on portals that are connected to the grandchild clusters containing 
the terminals. If $A,B$ are secondary clusters, secondary consistency
implies connectivity. If one or both are primary clusters, then we 
must also check that the path reaches the child of the primary cluster
containing the terminal, and then connectivity follows from 
either primary or secondary consistency (depending on the primary
cluster's child).

For each terminal pair, the program confirms that the pair is either
connected in the lowest cluster that contains them both, or else
that each terminal has a path exiting the cluster, and the exit
portals are marked as connected. Correctness follows.

\bibliographystyle{alpha}
\bibliography{sf}

\end{document}